\newcommand{\OptPol}{\mathbf{OptPol}}
\newcommand{\PolPoten}{\mathbf{PolPoten}}
\title{Polyhedral value iteration for discounted games and energy games} 
\newtheorem{theorem}{Theorem}
\newtheorem{lemma}[theorem]{Lemma}
\newtheorem{proposition}[theorem]{Proposition}
\newtheorem{assumption}{Assumption}
\newtheorem{remark}{Remark}
\renewcommand{\le}{\leqslant}
\renewcommand{\ge}{\geqslant}
\newcommand{\Min}{\mathrm{Min}}
\newcommand{\Max}{\mathrm{Max}}
\def\Max{\mathrm{Max}}
\def\Min{\mathrm{Min}}
\author{Alexander Kozachinskiy\thanks{Alexander.Kozachinskiy@warwick.ac.uk.  Supported by the EPSRC grant EP/P020992/1 (Solving Parity Games in Theory and Practice).}
}
\affil{Department of Computer Science, University of Warwick, Coventry, UK}
\begin{document}

\maketitle

\begin{abstract}
We present a deterministic algorithm, solving discounted games with $n$ nodes in  $n^{O(1)}\cdot (2 + \sqrt{2})^n$-time. For bipartite discounted games our algorithm runs in $n^{O(1)}\cdot 2^n$-time.
Prior to our work no deterministic algorithm running in time $2^{o(n\log n)}$ regardless of the discount factor was known.

We call our approach polyhedral value iteration. We rely on a well-known fact that the values of a discounted game can be found from the so-called optimality equations. In the algorithm we consider a polyhedron obtained by relaxing optimality equations. We iterate points on the border of this polyhedron by moving each time along a carefully chosen shift as far as possible. This continues until the current point satisfies optimality equations.

Our approach is heavily inspired by a recent algorithm of Dorfman et al.~(ICALP 2019) for energy games. For completeness, we present their algorithm  in terms of polyhedral value iteration. Our exposition, unlike the original algorithm, does not require edge weights to be integers and works for arbitrary real weights.

\end{abstract}

\section{Introduction}

We study \textbf{discounted games}, \textbf{mean payoff games} and \textbf{energy games}. All these three kinds of games are played on finite weighted directed graphs between two players called $\Max$ and $\Min$. In case of the discounted games, description of a game graph also includes a real number $\lambda\in(0, 1)$ called the \emph{discount factor}.
 Players shift a pebble along the edges of a graph. Nodes of the graph are partitioned into two subsets, one 
where $\Max$ controls the pebble and the other where $\Min$ controls the pebble. One should also indicate in advance a starting node (a node where the pebble is located initially).
By making infinitely many moves the players give rise to an infinite sequence of edges $e_1, e_2, e_3, \ldots$ of the graph (here $e_i$ is the $i$th edge passed by the pebble).
The outcome of the game is a real number determined by a sequence $w_1, w_2, w_3, \ldots$, where $w_i$ is the weight of the edge $e_i$. We assume that outcome serves as the amount of fine paid by player $\Min$ to player $\Max$. In other words, the goal of $\Max$ is to maximize the outcome and the goal of $\Min$ is to minimize it.

The outcome is computed differently in discounted, mean payoff and energy games.
\begin{itemize}
\item the outcome of a \textbf{discounted game} is
$$\sum\limits_{i = 1}^\infty \lambda^{i - 1} w_i,$$
where $\lambda$ is the discount factor of our game graph.
\item the outcome of a \textbf{mean payoff game} is
$$\limsup\limits_{n\to\infty} \frac{w_1 + \ldots + w_n}{n}.$$
\item the outcome of an \textbf{energy game} is
$$\begin{cases}1 & \mbox{the sequence $(w_1 + w_2 + \ldots + w_n), n\in\mathbb{N}$ is bounded from below,} \\ 0 & \mbox{otherwise,}\end{cases}$$
(we  interpret outcome $1$ as victory of $\Max$ and outcome $0$ as victory of $\Min$).
\end{itemize}
All these three games are \emph{determined}. This means that in every game graph, for every node $v$ of the graph there exists a real number $\alpha$ (called the \emph{value} of $v$) such that
\begin{itemize} \item \textbf{\emph{(a)}} there is a $\Max$'s strategy $\sigma$, guarantying that the outcome is at least $\alpha$ if the game starts in $v$; 
\item \textbf{\emph{(b)}} there is a $\Min$'s strategy $\tau$, guarantying that the outcome is at most $\alpha$ if the game starts in $v$.
\end{itemize}
Any such pair of strategies $(\sigma, \tau)$ is called \emph{optimal} for $v$. 

 Moreover~\cite{shapley1953stochastic,ehrenfeucht1979positional,chakrabarti2003resource}, these games are \emph{positionally determined}. This means that for every game graph there is a pair of \emph{positional} strategies $(\sigma, \tau)$ which is optimal for every node of the graph. A strategy is positional if for every node $v$ it always makes the same move when the pebble is in $v$.


We study algorithmic problems that arise from these games. Namely, the \emph{value problem} is a problem of finding values of the nodes of a given game graph. The \emph{decision problem} is a problem of comparing the value of a node with a given threshold. Another fundamental problem, called \emph{strategy synthesis}, is to find optimal positional strategies.

\bigskip

\textbf{Motivation}. Positionally determined games are of great interest in the  design of algorithms and computational complexity. Specifically, these games serve as a source of problems that are in NP$\cap$coNP but not known to be in P.

Below we survey algorithms for discounted, mean payoff and energy games (including our contribution). Mean payoff and discounted games are also studied in context of dynamic systems~\cite{filar2012competitive}.
Positionally determined games in general  have a broad impact on 
 formal languages and automata theory~\cite{automata_toolbox}. 

\bigskip

\textbf{Value problem vs.~decision problem.} The value problem, as more general one, is at least as hard as the decision problem. On the other hand, the values in the discounted and mean payoff games can be obtained from a play of two positional strategies. Hence, the bit-length of values is polynomial in the bit-length of the edge weights   and (in case of discounted games) the discount factor. This makes the value problem polynomial-time reducible to the decision problem via binary search. For energy games there is no difference between these two problems at all.  

On the other hand, for discounted and mean payoff games the value problem may turn out to be harder  for \emph{strongly polynomial} algorithms. Indeed, a reduction from the value problem to the decision problem via the binary search does not work when the weights are arbitrarily real numbers.


\bigskip

\textbf{Reductions, structural complexity.} It is known that $\Max$ wins in an energy game if and only if the value of the corresponding mean payoff game is non-negative~\cite{bouyer2008infinite}. Hence, energy games are equivalent to the decision problem for mean-payoff games with threshold $0$. Any other threshold $\alpha$ is reducible to threshold  $0$ by adding $-\alpha$ to all the weights. So energy games and mean payoff games are polynomial-time equivalent.

Decision problem for discounted games lies in UP$\cap$coUP~\cite{jurdzinski1998deciding}.
In turn, mean payoff games are polynomial-time reducible to discounted games~\cite{zwick1996complexity}. Hence, the same  UP$\cap$coUP upper bound applies to mean payoff and energy games. None of these problems is known to lie in P.

\bigskip

\textbf{Algorithms for discounted games.} There are two classical approaches to discounted games. In \emph{value iteration} approach, going back to Shapley~\cite{shapley1953stochastic}, one manipulates with a real vector indexed by the nodes. The vector of values of a discounted game is known to be a fixed point of an explicit contracting operator. By applying this operator repeatedly to an arbitrary initial vector, one obtains a sequence converging to the vector of values. Using this, Littman~\cite{littman1996algorithms} gave a deterministic $O\left(\frac{n^{O(1)}\cdot L}{1 - \lambda} \log\left(\frac{1}{1 - \lambda}\right)\right)$-time algorithm solving the value problem for discounted games. Here $n$ is the number of nodes, $\lambda$ is the discount factor and $L$ is the bit-length of input. This gives a polynomial time algorithm for $\lambda = 1 - \Omega(1)$.

\emph{Strategy iteration} approach, going back to Howard~\cite{howard1960dynamic} (see also~\cite{rao1973algorithms}), can be seen as a sophisticated way of iterating positional strategies of players. Hansen et al.~\cite{hansen2013strategy} showed that strategy iteration solves the value problem for discounted games in deterministic  $O\left(\frac{n^{O(1)}}{1 - \lambda} \log_2\left(\frac{1}{1 - \lambda}\right)\right)$-time. Unlike Littman's algorithm,  for $\lambda = 1 - \Omega(1)$ this algorithm is \emph{strongly} polynomial.

More recently, \emph{interior point methods} we applied to discounted games \cite{hansen2013complexity}. As of now, however, these methods do not outperform the algorithm of Hansen et al.

In all these algorithms the running time depends on $\lambda$  (exponentially in the bit-length of $\lambda$). As far as we know, no deterministic algorithm with running time $2^{o(n\log n)}$ regardless of the value of $\lambda$ was known. One can get $2^{O(n\log n)}$-time by simply trying all possible positional strategies of one of the players. Our main result pushes this bound down to $2^{O(n)}$. More precisely, we show the following
\begin{theorem}
\label{thm:disc}
There is a deterministic algorithm, finding the values in a discounted game on a graph with $n$ nodes in $n^{O(1)}\cdot (2 + \sqrt{2})^n$-time and $n^{O(1)}$-space. The algorithm can be performed even when the discount factor and the edge weights are arbitrary real numbers (assuming basic arithmetic operations with them are carried out by an oracle)\footnote{If the running time were polynomial, we could simply say that this algorithm is ``strongly polynomial''. Unfortunately, it seems that there is no such well-established terminology for superpolynomial-time algorithms. }.
\end{theorem}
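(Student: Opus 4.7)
My plan is to implement the polyhedral value iteration framework sketched in the abstract. First I would formalize the relevant polyhedron. The value vector $V^*\in\mathbb{R}^n$ is characterized as the unique solution of the \emph{optimality equations}: for every $\Max$ node $v$,
\[ V(v) \;=\; \max_{(v,u)\in E}\bigl(w(v,u) + \lambda V(u)\bigr), \]
and analogously with $\min$ for every $\Min$ node. Relaxing each equality to the one-sided inequality that keeps $V^*$ feasible (so $V(v)\ge w(v,u) + \lambda V(u)$ at every $\Max$ node for every out-edge, and the opposite direction at $\Min$ nodes) yields a polyhedron $\PolPoten \subseteq \mathbb{R}^n$ containing $V^*$. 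A point $V\in\PolPoten$ equals $V^*$ iff, in addition, at every node at least one of the defining inequalities is tight; the tight out-edges then encode a pair of optimal positional strategies.

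With this in place I would design the iterative procedure. Starting from a vertex $V^{(0)}$ of $\PolPoten$ (computable in polynomial time by an auxiliary LP, or from an initial guess of positional strategies via Shapley's contraction), at step $t$ I check whether $V^{(t)}$ satisfies the full optimality equations. If yes, output $V^{(t)} = V^*$. Otherwise, locate a \emph{defective} node $v$---one for which strict inequality holds in the relaxed optimality condition---and pick a shift direction $d\in\mathbb{R}^n$ tangent to every currently tight constraint of $\PolPoten$ while strictly reducing the defect at $v$; travel from $V^{(t)}$ along $d$ as far as possible without leaving $\PolPoten$, activating at least one new tight constraint, and set $V^{(t+1)}$ to the resulting point. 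Correctness is immediate from the characterization of $V^*$ above. Each iteration performs $n^{O(1)}$ arithmetic operations (solve a linear system for $d$, compute one-dimensional exit times) and uses $n^{O(1)}$ space if iterates and active-constraint sets are stored explicitly. Treating arithmetic operations as unit cost, this handles arbitrary real weights and discount factors.

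The genuine difficulty is bounding the number of iterations by $n^{O(1)}\cdot(2+\sqrt{2})^n$. Each iterate $V^{(t)}$ sits on a face of $\PolPoten$ and can be encoded by a \emph{signature}: the set of tight out-edges at every node, together with the sides on which defects currently lie. The naive count of signatures is $n^n$. To sharpen it to $(2+\sqrt{2})^n$ I would use two ingredients. First, a monotone potential---most naturally a weighted sum of coordinates that strictly decreases along every shift, by construction of the shift rule---ensuring no signature is revisited. Second, a structural argument restricting the signatures reachable by the shift rule, matching the identity
\[ (2+\sqrt{2})^n \;=\; \sum_{k=0}^{n}\binom{n}{k}\,2^{n-k}(\sqrt{2})^k, \]
where the factor $(\sqrt{2})^k$ presumably accounts for a more delicate two-coloured accounting on the subset of $k$ nodes whose tight edge comes from the ``wrong'' side of the current configuration. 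In the bipartite case, the bipartition cleanly kills the $(\sqrt{2})^k$-factor, leaving $2^n$ possibilities for the tight edges alone. Producing a shift rule that is simultaneously monotone and confined to this restricted class of signatures, and verifying the corresponding enumerative bound, is the step I expect to be the main obstacle.
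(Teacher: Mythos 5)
Your setup (the relaxed polyhedron, moving along a feasible direction until the boundary) matches the paper, but the two load-bearing parts of the proof are missing, and the one concrete design choice you do make is the wrong one. You ask for a shift $d$ that is \emph{tangent to every currently tight constraint} while reducing the defect at a defective node; such a direction need not exist, and if it always did, the set of tight constraints would grow strictly at every step, the process would stop after at most $m$ iterations, and you would have proved a polynomial bound --- a sign the rule cannot be right. In the paper's algorithm tight edges are deliberately allowed to be \emph{lost}: the shift is the value vector $\delta_x$ of an auxiliary ``discounted normal play game'' played on the tight subgraph $G_x$ (Proposition \ref{dnpg_values}), whose values lie in $\{0,\pm\lambda^i\}$; it preserves only those tight edges that are optimal in that auxiliary game ($\delta_x(a)=\lambda\delta_x(b)$), and every newly tightened edge is a ``violating pair''. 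This specific choice of shift is not a convenience but the object the entire termination analysis is built on.

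The iteration bound --- which you acknowledge you do not have --- is exactly where the paper's real work lies, and your two suggested ingredients do not survive contact with the actual dynamics. A weighted sum of the coordinates of the iterate is not monotone (the shift $\delta_x$ has entries of both signs), and the binomial identity $(2+\sqrt2)^n=\sum_k\binom{n}{k}2^{n-k}(\sqrt2)^k$ plays no role. Instead, the paper abstracts the sequence of tight subgraphs into a ``DNP games iteration'' (keep all optimal edges, add at least one violating pair), attaches to each graph $H$ two vectors $f^H,g^H\in\mathbb{N}^{2n-1}$ counting, for each $i$ and each player, the nodes with auxiliary value $\lambda^i$ resp.\ $-\lambda^i$, and proves (Lemma \ref{main_lemma}) that with respect to an \emph{alternating} lexicographic order neither vector ever decreases and at least one strictly increases at every step; the bound then follows by counting the vectors satisfying (\ref{sum}--\ref{b}) via a generating-function estimate with $\rho=1-1/\sqrt2$ (Lemma \ref{fact}), giving $O(n(2+\sqrt2)^n)$, and the bipartite $2^n$ bound comes from an extra parity constraint \eqref{add} on these vectors, not from the bipartition removing a $(\sqrt2)^k$ factor. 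Two smaller gaps for the real-weight, $n^{O(1)}$-space claim: initialization and bit-length control cannot go through a generic LP solver or Shapley iteration; the paper re-anchors at every step with $RealizeGraph$, i.e.\ Megiddo's strongly polynomial algorithm for systems with two variables per inequality. As it stands, your proposal reproduces the framework announced in the abstract but proves neither termination nor the claimed running time.
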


\begin{remark}One of the anonymous reviewers of this paper noticed that the complexity analysis in Theorem \ref{thm:disc} can be improved to $n^{O(1)} (1/\rho)^n$, where $\rho$ is a unique root of the equation $\rho^2 + 2\rho/(1 - \rho) = 1$ in the interval $(0, 1)$. Numerically, $1/\rho \approx 3.214$, while $2 + \sqrt{2} \approx 3.414$. We describe this suggestion after our proof of Theorem \ref{thm:disc}.
\end{remark}

We also obtain a better bound for a special case of discounted games, namely for \emph{bipartite} discounted games. We call a discounted game bipartite if in the underlying graph each edge is either an edge from a Max's node to a Min's node or an edge from a Min's node to a Max's node. In other words, in a bipartite discounted game players can  only make moves alternatively.
\begin{theorem}
\label{thm:bip_disc}
There is a deterministic algorithm, finding the values in a bipartite discounted game on a graph with $n$ nodes in $n^{O(1)}\cdot 2^n$-time and $n^{O(1)}$-space.
The algorithm can be performed even when the discount factor and the edge weights are arbitrary real numbers (assuming basic arithmetic operations with them are carried out by an oracle).
\end{theorem}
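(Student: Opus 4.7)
My plan is to reuse the polyhedral value iteration that proves Theorem~\ref{thm:disc} unchanged at the algorithmic level, and to exploit bipartiteness purely in the combinatorial running-time count. Specifically, I would work with the polyhedron
\[
P \;=\; \{x \in \mathbb{R}^V : x_v \ge w(v,u) + \lambda x_u \text{ for every edge } (v,u) \text{ with } v \in V_{\Max}\},
\]
initialize at a point of $P$ lying above the value vector, and in each iteration pick a node where a Min-player constraint is violated, compute a shift direction that descends while respecting the currently tight Max constraints, and translate maximally along this direction until a new Max constraint becomes tight. The iteration halts once all optimality equations hold simultaneously, at which point the iterate coincides with the value vector by positional determinacy. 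Correctness and the $n^{O(1)}$ per-iteration cost (dominated by solving a linear system to produce the shift) are inherited from Theorem~\ref{thm:disc} without modification.

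The improvement over Theorem~\ref{thm:disc} would come entirely from the enumeration of possible iteration states. Each state is described by combinatorial data---for instance, the set of currently tight Max-edges together with the currently selected ``violating'' Min-edge at each Min node---and the monotonicity argument supplied for Theorem~\ref{thm:disc} should show that no state recurs, because the iterate's coordinate vector strictly decreases in some lexicographic sense. In the general setting this data can be encoded, very roughly, by one symbol per node over an alphabet whose size produces the $(2+\sqrt{2})^n$ bound.

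Bipartiteness should collapse the per-node alphabet to $\{0,1\}$. Because every edge of a bipartite discounted game runs between $V_{\Max}$ and $V_{\Min}$, any maximal chain of tight edges strictly alternates between the two players, so the status stored at one endpoint of a tight edge is forced by the status stored at the other endpoint and need not be counted twice. Consequently the state of the iteration can be encoded by a single subset $S \subseteq V$---say, the set of nodes whose outgoing tight edge currently agrees with a fixed canonical choice---yielding at most $2^n$ distinct states and hence the claimed $n^{O(1)} \cdot 2^n$ running time; the $n^{O(1)}$ space bound needs no new idea beyond Theorem~\ref{thm:disc}, since only the current iterate and one shift vector need be stored. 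The main obstacle I anticipate is pinning down exactly which subset $S$ encodes the state and proving both that distinct iterations induce distinct $S$ and that bipartiteness genuinely rules out the ``neighbouring same-player labels'' that produce the $\sqrt{2}$ overhead in the general case; once this encoding is nailed down, everything else is a bookkeeping exercise already performed for Theorem~\ref{thm:disc}.
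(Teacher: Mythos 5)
Your high-level plan --- keep Algorithm \ref{disc_alg} unchanged and let bipartiteness enter only through the count of possible iteration states --- is exactly the route the paper takes, but the concrete combinatorial step you propose does not work as stated, and it is the whole content of Theorem \ref{thm:bip_disc}. The $(2+\sqrt{2})^n$ bound of Theorem \ref{thm:disc} is \emph{not} obtained by assigning each node a symbol from an alphabet of constant size: the quantity that never recurs is not the set of tight edges (nor ``the currently selected violating Min-edge at each Min node''), but the pair of census vectors $f^H, g^H \in \mathbb{N}^{2n-1}$ recording, for each possible DNP-game value $\pm\lambda^i$, how many Max-nodes and how many Min-nodes of the tight-edge graph attain it; Lemma \ref{main_lemma} shows these vectors are monotone in the alternating lexicographic order, and Lemma \ref{fact} counts the admissible vectors by a generating-function argument (the base $2+\sqrt{2}$ is $1/\rho$ for $\rho = 1-1/\sqrt{2}$, not an alphabet size). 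Consequently your claim that bipartiteness ``collapses the per-node alphabet to $\{0,1\}$'' and that the state is a subset $S\subseteq V$ of nodes agreeing with a canonical tight edge has no supporting argument: you neither exhibit a monotone quantity determined by $S$ nor explain why distinct iterations give distinct $S$, and the heuristic that ``the status at one endpoint of a tight edge is forced by the other endpoint'' does not translate into any of the inequalities actually used in the analysis.

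What bipartiteness really buys, in the paper, is a structural constraint on $f^H$ and $g^H$: a node with $\delta_H(a)=\lambda^i$ reaches a Min-sink along an optimal path of length $i$, and since every edge switches the player, the owner of $a$ is forced by the parity of $i$ (Min for even $i$, Max for odd $i$), and symmetrically for negative values. Hence in each consecutive pair of coordinates of $f^H$ (and of $g^H$) one entry is identically zero, i.e.\ condition \eqref{add} holds, and the count of vectors satisfying (\ref{sum}--\ref{add}) with $\|v\|_1=s$ and $t(v)=t$ becomes the number of compositions $\binom{s-1}{t}$, which sums to $O(2^n)$. This parity-forcing observation and the resulting binomial count are the missing ingredients in your proposal. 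A secondary concern: you describe the polyhedron as carrying only the Max-inequalities and the iteration as a descent from above, whereas the algorithm whose analysis you want to inherit works with the two-sided optimality polyhedron $\OptPol$ and shifts given by DNP-game values on the tight subgraph; with a one-sided polyhedron the monotonicity machinery of Section \ref{sec:disc_complexity} is not available verbatim, so correctness cannot simply be ``inherited without modification.''
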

Our algorithm is the fastest  known \emph{deterministic} algorithm for discounted games when $\lambda \ge 1 - (2 + \sqrt{2} +\Omega(1))^{-n}$. For bipartite discounted games it is the fastest one for $\lambda \ge 1 - (2 + \Omega(1))^{-n}$. For smaller discounts, the algorithm of Hansen et al.~outperforms the bound we obtain for our algorithm. One should also mention that their algorithm is applicable to more general \emph{stochastic} discounted games, while our algorithm is not.

In addition, it is known that \emph{randomized} algorithms can solve discounted games faster, namely, in time $2^{O(\sqrt{n}\cdot \log n)}$ \cite{ludwig1995subexponential, halman2007simple, bjorklund2005combinatorial}. These algorithms are based on formulating discounted games as an \emph{LP-type} problem~\cite{matouvsek1996subexponential}.

\bigskip

\textbf{Algorithms for mean payoff and energy games.} In the literature on the mean payoff and energy games it is often assumed that the edge weights are integers. In this case, for a given game graph we denote by $W$ the largest absolute value of an edge weight.

Zwick and Paterson~\cite{zwick1996complexity} gave an algorithm solving  the value problem for mean payoff games in \emph{pseudopolynomial} time, namely, in time $O(n^{O(1)} \cdot W)$ (see also \cite{pisaruk1999mean}). Brim et al.~\cite{brim2011faster} improved the polynomial factor before $W$. 
 In turn, Fijalkow et al.~\cite{fijalkow2018complexity} slightly improved the dependence on $W$ (from $W$ to $W^{1 -1/n}$).

There are algorithms with running time depending on $W$  much better (at the cost that they are exponential in $n$).  Lifshits  and Pavlov~\cite{lifshits2007potential} gave a $O(n^{O(1)}\cdot 2^n)$-time algorithm for energy games (here the running time does not depend at all on $W$). Recently, Dorfman et al.~\cite{dorfman2019faster} pushed $2^n$ down to $2^{n/2}$ by giving a $O(n^{O(1)}\cdot 2^{n/2}\log W)$-time algorithm for energy games. They also announced that the $\log W$ factor can be removed. At the cost  of an extra $\log W$ factor these algorithms can be lifted to the value problem for mean payoff games.

All these algorithms are deterministic. As for randomized algorithms, the state-of-the-art is $2^{O(\sqrt{n}\log n)}$-time, the same as for discounted games.

We show that:
\begin{theorem}
\label{thm:energy}
There is a deterministic algorithm, finding in $n^{O(1)} 2^{n/2}$-time and $n^{O(1)}$-space all the nodes that are winning for $\Max$ in a given energy game with $n$ nodes. The algorithm can be performed even when the edge weights are arbitrary real numbers (assuming basic arithmetic operations with them are carried out by an oracle)
\end{theorem}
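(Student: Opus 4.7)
The plan is to cast the energy game as a problem about a suitable system of optimality inequalities, apply the polyhedral value iteration framework developed earlier in the paper, and then carry out the counting argument of Dorfman et al.~in this language. First, I would recall the classical characterization: $\Max$ wins from a node $v$ if and only if there exists a potential $\phi\colon V\to\mathbb{R}\cup\{+\infty\}$ with $\phi(v)<\infty$ satisfying $\phi(u)+w(u,u')\ge\phi(u')$ for every outgoing edge $(u,u')$ at every $\Min$-node and for at least one outgoing edge at every $\Max$-node. The set of $\Max$-winning nodes equals $\{v:\phi^*(v)<\infty\}$ for the pointwise smallest such $\phi^*$, which is the least fixed point of the monotone Bellman-style operator $T(\phi)(v)=\mathrm{op}_v\{\phi(u')-w(v,u'):(v,u')\in E\}$, where $\mathrm{op}_v$ is $\min$ for $\Max$-nodes and $\max$ for $\Min$-nodes.

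Next, exactly as for the discounted case, I would relax the equations $\phi=T(\phi)$ to the system of linear inequalities $\phi(v)\ge\phi(u')-w(v,u')$ for all outgoing edges at every $\Min$-node and for the currently selected edge at every $\Max$-node. Fixing a positional strategy of $\Max$ converts the Bellman system into a polyhedron $P_\sigma$; the full set of admissible potentials is $\bigcup_\sigma P_\sigma$. Polyhedral value iteration starts at the trivial upper bound $\phi\equiv+\infty$ and, at each step, selects a shift vector $\delta$ supported on a carefully chosen subset of nodes, moves from $\phi$ to $\phi-t\delta$ for the largest $t\ge 0$ that keeps us inside the polyhedron, and updates the currently active $\Max$-strategy accordingly. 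The process halts at a point that is a fixed point of $T$ on its finite component; nodes with value $+\infty$ are declared $\Min$-winning.

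The main obstacle, and the only place where real work is required, is to choose the shift vectors so that the iteration terminates after at most $n^{O(1)}\cdot 2^{n/2}$ steps. Here I would follow the Dorfman et al.~recursion: choose a pivot, partition the vertex set into two halves according to whether a node's potential lies above or below a certain threshold, recursively solve on both halves, and then stitch the partial solutions together via a single polyhedral shift. This yields a recurrence of the form $T(n)\le 2\,T(n/2)+\mathrm{poly}(n)\cdot 2^{n/2}$ which resolves to $\mathrm{poly}(n)\cdot 2^{n/2}$, and each recursive call only needs to remember a polynomial amount of information about its caller, giving the claimed polynomial space bound.

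The reason this reformulation avoids the $\log W$ factor of the original Dorfman et al.~algorithm—and therefore also removes the integer-weight assumption—is that the combinatorial bound on the number of shifts never refers to the magnitude of the weights; it is a bound on the number of distinct tight-constraint configurations visited along the path of $\phi$ in $\bigcup_\sigma P_\sigma$. All arithmetic in a single step reduces to finitely many additions, subtractions and comparisons of expressions involving the weights, so the whole algorithm goes through unchanged when the weights are arbitrary reals accessible through an arithmetic oracle.
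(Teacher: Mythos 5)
Your overall framing (potentials/energy progress measures, relaxing the Bellman equations to linear inequalities, iterating a point by maximal feasible shifts, and arguing that the iteration count depends only on tight-constraint configurations rather than on $W$) is in the right spirit, but the proposal is missing the actual content of the theorem at its only hard point: the bound of $n^{O(1)}2^{n/2}$ on the number of shifts. You delegate this to ``the Dorfman et al.~recursion: choose a pivot, partition the vertex set into two halves according to whether a node's potential lies above or below a certain threshold, recursively solve on both halves, and then stitch the partial solutions together via a single polyhedral shift,'' yielding $T(n)\le 2T(n/2)+\mathrm{poly}(n)2^{n/2}$. That is not how the Dorfman et al.~bound (or the paper's) is obtained, and as written it is not an argument: it is never specified what game the two ``halves'' are (the partition by a potential threshold changes as the iteration proceeds, and the restriction of an energy game to a vertex subset is not an energy game whose solution composes with the other half), why a recursive solution on each half would be correct, or why one ``stitching'' shift suffices. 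The paper's proof instead bounds the number of iterations directly by a combinatorial invariant: it associates to each tight-edge graph $H$ two vectors $f^H,g^H$ built from the values of an auxiliary discounted normal play game on $H$, shows that in every step \emph{both} vectors strictly increase in an alternating lexicographic order (this is where the ``strongly violating'' edge produced by the shift $\chi_x^+$ is used), notes $\|f^H\|_1+\|g^H\|_1=n$ so at least half the steps have one of them with $\ell_1$-norm at most $n/2$, and counts the admissible vectors of norm at most $n/2$ as $O(2^{n/2})$. Nothing in your proposal substitutes for this counting lemma, so the claimed running time is unsupported.

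Two further ingredients the paper needs and you omit: the strongly polynomial reduction to a \emph{bipartite} energy game (bipartiteness is what makes the polyhedron of potentials nonempty and what forces the parity pattern on $f^H,g^H$ that brings the count from $2^n$ down to $2^{n/2}$), and the treatment of zero cycles, which the paper handles by a symbolic perturbation $w(e)\mapsto w(e)+\rho$ over the lexicographically ordered group $\mathbb{R}^2$ so that all tight-edge graphs are acyclic and the argument works for arbitrary real weights; your remark about an arithmetic oracle does not address this. Finally, starting from $\phi\equiv+\infty$ and working with $\bigcup_\sigma P_\sigma$ introduces infinite coordinates and a nonconvex feasible region, neither of which your polyhedral-shift step is defined for; the paper avoids both by using a single polyhedron (inequalities for all edges of both players) and by reading the winning sets off the signs of the auxiliary game values at termination.
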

This certifies that indeed, as stated without a proof in~\cite{dorfman2019faster}, the $\log W$-factor before $2^{n/2}$ can be removed. Moreover, this certifies that for the $2^{n/2}$ bound we do not need an assumption that the edge weights are integral.

\begin{remark}
An updated online version of~\cite{dorfman2019faster} contains an algorithm that works for real weights as well.  In addition, this algorithm also computes the minimal \textbf{energy levels} for all nodes where Max wins (i.e., minimal $c$ such that Max can guaranty that $c + w_1 + w_2 + \ldots + w_n \ge 0$ for all $n$). Still, by giving a proof of Theorem \ref{thm:energy} along with our algorithm for discounted games, we hope to present these two results in the same framework.
\end{remark}

%

\subsection{Our technique}
Arguably, our approach arises more naturally for discounted games, yet it roots in the algorithm of Dorfman et al.~for energy games. 

For discounted games we iterate a real vector $x$ with coordinates indexed by the nodes of the graph, until $x$ coincides with the vector of values. Thus, our approach can also be called value iteration. However, it differs significantly from the classical value iteration, and we call it \emph{polyhedral} value iteration.

We rely on a well-known fact that the vector of values is a unique solution to the so-called \emph{optimality equations}. Optimality equations can be seen as a set of conditions  of the following form. First, they include a system of linear inequalities over $x$. Namely, this system contains for each edge an inequality between the values of its endpoints.
In addition, optimality equations state that every node has an out-going edge for which the corresponding inequality turns into an equality.

By forgetting about this additional condition we obtain a polyhedron containing the vector of values. We call this polyhedron \emph{optimality polyhedron}. Of course, besides the vector of values there are some other points too.

 We initialize $x$ by finding any point belonging to the optimality polyhedron. There is little chance that $x$ will satisfy optimality equations.  So until it does, we do the following. We compute a \emph{feasible} shift, i.e., a shift that does not immediately lead us outside the optimality polyhedron.
 Then we move from $x$ along this shift as far as possible, until the border of the optimality polyhedron is reached. This point on the border will be the new value of $x$. 

We choose a shift in a very specific way. First, a feasible shift from $x$ should not violate \emph{tight edges} (edges for which the corresponding inequality in the optimality polyhedron turns into an equality on $x$). To produce such a shift, we consider an auxiliary discrete game which we call \emph{discounted normal play game} (DNP game for short), played only on the tight edges. Essentially, the DNP game for $x$ will be just our initial discounted game, but with zero edge weights and only with edges that are tight on $x$. It will be easy to see that the vector of values of a DNP game always forms a feasible shift. One might find this approach resembling the Primal-Dual approach from Combinatorial Optimization (see, e.g.~\cite{combopt}).

It turns out that this process converges to the vector of values. Moreover, it does in $O(n (2 + \sqrt{2})^n)$ steps. This bound follows from a certain combinatorial fact about directed graphs. This fact can be stated and proved independently of our algorithm. In more detail, we define a notion of a \emph{DNP games iteration}. A DNP games iteration is a sequence of DNP games on directed graphs, where each next DNP game is obtained from the previous one according to certain rules.
Once again, we do not assume that DNP games in a DNP games iteration are necessarily played on graphs of tight edges from our algorithm -- they can be played on arbitrary graphs.

After we give a formal definition of a DNP games iteration, we split the complexity analysis into two independent parts. First, we show that the sequence of DNP games arising from our algorithm always forms a DNP games iteration.
Second, we show that the length of \emph{any} DNP games iteration is  $O(n (2 + \sqrt{2})^n)$.  In the second (and the most challenging) part of the argument we do not mention at all our algorithm and our initial discounted game. We find this feature of our argument very important. First, it significantly simplifies the exposition. Second, it indicates that this sort of an argument might be relatively easy adapted to other games. We demonstrate this with the example of the algorithm of Dorfman et al. for energy games, and we hope that it might lead to some new applications.

\bigskip

Dorfman et al.~build upon a \emph{potential lifting} algorithm of  Brim et al.~\cite{brim2011faster}. They notice that in the algorithm of Brim et al.~a lot of consecutive iterations may turn out to be lifting  the same set of nodes. Instead, Dorfman et al.~perform all these iterations at once, accelerating the algorithm of Brim et al.

In our exposition, instead of potential lifting, we perform a polyhedral value iteration, but now for energy games. In this language, a series of repetitive actions in the algorithm of Brim et al.~corresponds to just one feasible shift in the polyhedral value iteration.

The underlying polyhedron will be called the \emph{polyhedron of potentials}. Loosely speaking, it will be a limit of optimality polyhedra as $\lambda\to 1$. This resembles a well-known representation of mean payoff games as a limit of discounted games, see, e.g.,~\cite{puterman2014markov}. 

 As we mentioned, the complexity analysis is also carried out by DNP games iteration. In fact, almost no new argument is needed. The same bound as for discounted games follows at no cost at all. To obtain an improvement to $2^{n/2}$, we just notice that in case of energy games case the underlying DNP games iteration satisfies some additional restrictions. 

%

\section{Preliminaries}

\subsection{Discounted games}

To specify a discounted game $\mathcal{G}$ one has to specify: a finite directed graph $G = (V, E)$,  in which every node has at least one out-going edge;  a partition of the set of nodes $V$ into two disjoint subsets $V_\Max$ and $V_\Min$; a weight function $w\colon E \to \mathbb{R}$; a real number $\lambda\in (0, 1)$ called the \emph{discount factor} of $\mathcal{G}$.

The following equations in $x\in\mathbb{R}^V$ are called \emph{optimality equations} for $\mathcal{G}$: 
\begin{align}
\label{max}
x_a &= \max\limits_{e = (a, b) \in E} w(e) + \lambda x_b, \qquad a \in V_{\Max},\\
\label{min}
x_a &= \min\limits_{e = (a, b) \in E} w(e) + \lambda x_b, \qquad a \in V_{\Min},
\end{align}

For any discounted game $\mathcal{G}$ there exists a unique solution $x^*$ to (\ref{max}--\ref{min})~\cite{shapley1953stochastic}. Such $x^*$ is called \emph{the vector of values} of $\mathcal{G}$ (see the Introduction for the game-theoretic interpretation of $x^*$). In Theorems \ref{thm:disc} and \ref{thm:bip_disc} we study an algorithmic problem of finding a solution to (\ref{max}--\ref{min}) for a given discounted game $\mathcal{G}$. We refer to this problem as to ``finding the values of a discounted game''.

\subsection{Energy games}

To specify an energy game $\mathcal{G}$, we need exactly the same information as for a discounted game, except that in energy games there is no discount factor. So below $G = (V, E), V_\Max, V_\Min$ and $w$ are exactly as in the previous subsection.

Let us give a formalization of what does it mean that ``a node $v\in V$ is winning for Max in the energy game $\mathcal{G}$''. Again, we refer the reader to the Introduction for a less formal discussion of this notion.

We use the following terminology for cycles in $G$.  By the weight of a cycle we mean the sum of the values of $w$ over its edges. We call a cycle positive if its weight is positive. In the same way we define negative cycles, zero cycles, and so on.

A \emph{Max's positional strategy} is a mapping $\sigma\colon V_\Max \to E$ such that for every $a\in V_\Max$ the edge $\sigma(a)$ starts in the node $a$. Similarly, a \emph{Min's positional strategy} is a mapping $\tau\colon V_\Min \to E$ such that for every $b\in V_\Min$ the edge $\tau(b)$ starts in the node $b$.

For a Max's positional strategy $\sigma$, we say that an edge $e = (a, b)\in E$ is \emph{consistent} with $\sigma$ if either $a\in V_\Min$ or $a\in V_\Max, e = \sigma(a)$. We denote by $E^\sigma$ the set of edges that are consistent with a Max's positional strategy $\sigma$. We define consistency with a Min's positional strategy $\tau$ similarly. We denote by $E_\tau$ the set of all edges that are consistent with a Min's positional strategy $\tau$.

For a Max's positional strategy $\sigma$ and a Min's positional strategy $\tau$, by $G^\sigma, G_\tau$ and $G^\sigma_\tau$ we denote the following three graphs: $G^\sigma = (V, E^\sigma), G_\tau = (V, E_\tau), G^\sigma_\tau = (V, E^\sigma\cap E_\tau)$.  

We say that a node $v\in V$ is \emph{winning for Max in the energy game $\mathcal{G}$} if there exists a Max's positional strategy $\sigma$ such that only non-negative cycles are reachable from $v$ in the graph $G^\sigma$. We say that a node $u\in V$ is \emph{winning for Min in the energy game $\mathcal{G}$} is there exists a Min's positional strategy $\tau$ such that only negative cycles are reachable from $u$ in the graph $G_\tau$. It is not hard to see that no node can be simultaneously winning for Max and winning for Min. Indeed, otherwise a unique simple cycle, reachable from this node in the graph $G^\sigma_\tau$, would be simultaneously non-negative and negative.
In fact, every node is either winning for Max or is winning for Min (this easily follows from the positional determinacy of the mean payoff games, see~\cite{bouyer2008infinite,chakrabarti2003resource,ehrenfeucht1979positional}). In Theorem \ref{thm:energy}, we consider an algorithmic problem of find the set of nodes that are winning for Max in a given energy game $\mathcal{G}$ (the rest of the nodes should be then winning for Min).

\begin{remark}
We call a discounted or an energy game $\mathcal{G}$ \emph{bipartite} if  $E\subseteq V_\Max\times V_\Min \cup V_\Min\times V_\Max$. In this case we also use a term   ``bipartite'' for the underlying graph of the game.
\end{remark}

\section{$n^{O(1)}\cdot (2 + \sqrt{2})^n$-time algorithm for discounted games}

In this section we give an algorithm establishing Theorem \ref{thm:disc} and \ref{thm:bip_disc}.
We consider a discounted game $\mathcal{G}$, given by  a graph $G = (V, E)$, a partition $V = V_\Max\sqcup V_\Min$, a weight function $w\colon E\to\mathbb{R}$, and a discount factor $\lambda$. We assume that $G$ has $n$ nodes and $m$ edges.

In Subsection \ref{subsec:dnpg} we define auxiliary games that we call \emph{discounted normal play games}. We use these games both in the formulation of the algorithm and in the complexity analysis. In Subsection \ref{subsec:opt_pol} we define the so-called \emph{optimality polyhedron} by relaxing optimality equations (\ref{max}--\ref{min}).

The algorithm is given in Subsection \ref{subsec:alg}. 
In the algorithm we iterate the points of the optimality polyhedron in search of the solution to  (\ref{max}--\ref{min}). First, we initialize by finding any point belonging to the optimality polyhedron. Then for a current point we define a shift which does not immediately lead us outside the optimality polyhedron. In the definition of the shift we use discounted normal play games. To obtain the next point we move as for as possible along the shift until we reach the border.
 We do so until the current point satisfies (\ref{max}--\ref{min}). Along the way we also take some measures to prevent the bit-length of the current point from growing super-polynomially.

 This process always terminates and, in fact, can take only $O(n (2 + \sqrt{2})^n)$ iterations. Moreover, for bipartite discounted games it can take only $O(2^n)$ steps.
 A proof of it is deferred to Section \ref{sec:disc_complexity}.

\subsection{Discounted normal play games.}
\label{subsec:dnpg}
These games will always  be played on directed graphs with the same set of nodes as $G$. Given such a graph $G^\prime = (V, E^\prime)$, we equip it with the same partition of $V$ into  $V_{\Max}$ and $V_{\Min}$ as in $G$. There may be sinks in $G^\prime$, i.e., nodes with no out-going edges.

Two players called $\Max$ and $\Min$ move a pebble along the edges of $G^\prime$. Player $\Max$ controls the pebble in the nodes from $V_{\Max}$ and player $\Min$ controls the pebble in the nodes from $V_\Min$. If the pebble reaches a sink of $G^\prime$ after $s$ moves, then the player who cannot make a move pays a fine of size $\lambda^s$ to his opponent. Here $\lambda$ is the discount factor of a discounted game which we want to solve.
 If the pebble never reaches a sink, i.e., if the play lasts infinitely long, then players pay each other nothing. 

 By the outcome of the play we mean the income of player $\Max$. Thus, the outcome is 
\begin{itemize}
\item positive, if the play ends in a sink from $V_{\Min}$;
\item zero, if the play lasts infinitely long;
\item negative, if the play ends in a sink from $V_{\Max}$.
\end{itemize}
It is not hard to see that in this game players have optimal positional strategies. Moreover, if $\delta(v)$ is the value of this game in the node $v$, then
\begin{align}
\label{max_sink}
\delta(s) &= -1, \mbox{if $s$ is a sink from $V_\Max$},\\
\label{min_sink}
\delta(s) &= 1, \mbox{if $s$ is a sink from $V_\Min$},\\
\label{max_non_sink}
\delta(a) &= \lambda \cdot \max\limits_{(a, b) \in E^\prime} \delta(b), \mbox{if $a\in V_\Max$ and $a$ is not a sink},\\
\label{min_non_sink}
\delta(a) &= \lambda \cdot \min\limits_{(a, b) \in E^\prime} \delta(b), \mbox{if $a\in V_\Min$ and $a$ is not a sink}.
\end{align}
We omit proofs of these facts as below we only require the following
\begin{proposition}
\label{dnpg_values}
For any $G = (V, E^\prime)$ there exists exactly one solution to (\ref{max_sink}--\ref{min_non_sink}), which can be found in strongly polynomial time. 
\end{proposition}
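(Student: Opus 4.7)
The plan is to prove uniqueness by a short contraction argument and then to construct a solution by decomposing $V$ via two reachability sub-games, which simultaneously yields a strongly polynomial algorithm.

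For uniqueness, I would define an operator $T\colon \mathbb{R}^V \to \mathbb{R}^V$ by letting $(Tx)_v$ be the right-hand side of whichever equation among (\ref{max_sink}--\ref{min_non_sink}) applies at $v$. At sinks, $(Tx)_v$ is the constant $\pm 1$; at non-sinks it is $\lambda$ times a max or min of coordinates of $x$. Since $\max$ and $\min$ are $1$-Lipschitz in $\|\cdot\|_\infty$, one immediately obtains $\|Tx - Ty\|_\infty \le \lambda \|x - y\|_\infty$, so $T$ is a $\lambda$-contraction and has at most one fixed point.

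To construct a solution (and the algorithm), I would decompose $V$ according to two reachability games. Let $W^+$ be the set of nodes from which $\Max$ has a positional strategy forcing the pebble to reach some $\Min$-sink in finitely many moves, and let $W^-$ be the analogous set for $\Min$ forcing the pebble to reach a $\Max$-sink. The two sets are disjoint: under any pair of positional strategies, the play reaches at most one sink. Both sets, together with the associated positional strategies and the smallest attractor ranks $d^+(v)$ and $d^-(v)$, are computable in strongly polynomial time by the standard attractor construction for reachability games. I would then propose $\delta(v) = \lambda^{d^+(v)}$ on $W^+$, $\delta(v) = -\lambda^{d^-(v)}$ on $W^-$, and $\delta(v) = 0$ otherwise, and verify (\ref{max_sink}--\ref{min_non_sink}).

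The verification splits into three cases: sinks (immediate from the construction), non-sinks in $W^+ \cup W^-$ (the corresponding max or min over neighbors' $\delta$-values equals $\lambda^{d^\pm(v) - 1}$, a routine consequence of how a node enters the attractor), and non-sinks in $V \setminus (W^+ \cup W^-)$, which I expect to be the main obstacle. For such $a \in V_\Max$, I would show $\max_{(a,b) \in E'} \delta(b) = 0$ using the standard closure properties of $W^\pm$: a $\Max$-node with any out-neighbor in $W^+$ itself lies in $W^+$, and a $\Max$-node all of whose out-neighbors lie in $W^-$ itself lies in $W^-$. Together these force $a$ to have no out-neighbor in $W^+$ (so $\max \le 0$) and at least one out-neighbor outside $W^-$ (so $\max \ge 0$, since that out-neighbor must then lie in $V \setminus (W^+ \cup W^-)$ and have $\delta = 0$). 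The $V_\Min$ case is symmetric. Strong polynomiality is then clear: the two attractor computations are purely combinatorial and polynomial in $|V|$ and $|E'|$, while forming the powers $\lambda^{d^\pm(v)}$ costs only $O(|V|)$ oracle multiplications.
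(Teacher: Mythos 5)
Your proposal is correct and takes essentially the same route as the paper: uniqueness comes from the same $\lambda$-contraction in the $\|\cdot\|_\infty$-norm, and your attractor computation of $W^{\pm}$ with ranks $d^{\pm}$ is exactly the paper's strongly polynomial level-by-level determination of the sets $\{v \mid \delta(v) = \lambda^k\}$ and $\{v \mid \delta(v) = -\lambda^k\}$. The only organizational difference is that you establish existence by directly verifying that the attractor-based assignment satisfies (\ref{max_sink}--\ref{min_non_sink}), whereas the paper gets existence from the Banach fixed point theorem and then reads the level-by-level rule off the equations using the a priori form of the solution.
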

Before proving Proposition \ref{dnpg_values} let us note that for graphs with $n$ nodes any solution $\delta$ to (\ref{min_sink}--\ref{max_non_sink}) satisfies $\delta(v) \in \{1, \lambda, \ldots, \lambda^{n -1}, 0, -\lambda^{n - 1}, \ldots, -1\}$. Indeed, if $a$ is not a sink, then by (\ref{max_non_sink}--\ref{min_non_sink}) the node $a$ has an out-going edge leading to a node with $\delta(b) = \delta(a)/\lambda$. By following these edges we either reach a sink after at most $n - 1$ steps (and then $\delta(a) = \pm \lambda^{i}$ for some $i\in\{0, 1, \ldots, n - 1\}$) or we go to a loop. For all the nodes on a loop of length $l \ge 1$ we have $\delta(b) = \lambda^l \delta(b)$, which means that $\delta(b) = 0$ everywhere on the loop (recall that $\lambda \in (0, 1)$). Thus, if we reach such a loop from $a$, we also have $\delta(a) = 0$. 

From this it is also clear that $\delta(v) = 1$ if and only if $v\in V_\Min$ and $v$ is a sink of $G^\prime$. Similarly, $\delta(v) = -1$ if and only if $v\in V_\Max$ and $v$ is a sink of $G^\prime$.

\begin{proof}[Proof of Proposition \ref{dnpg_values}]
To show the existence of a solution and its uniqueness we employ Banach fixed point theorem. Let $\Delta$ be the set of all vectors $f\in \mathbb{R}^V$, satisfying 
$$f(s) = 1 \mbox{ for all sinks $s\in V_\Min$}, \qquad f(t) =  -1 \mbox{ for all sinks $t\in V_{\Max}$}.$$
Define the following mapping $\rho\colon \Delta \to \Delta$:
$$
\rho(f)(a) = \begin{cases} -1 & \mbox{$a$ is a sink from $V_\Max$}, \\1 & \mbox{$a$ is a sink from $V_\Min$}, \\ \lambda \cdot \max\limits_{(a, b) \in E^\prime} f(b) & \mbox{$a\in V_\Max$ and $a$ is not a sink},\\
\lambda \cdot \min\limits_{(a, b) \in E^\prime} f(b) & \mbox{$a\in V_\Min$ and $a$ is not a sink}.\end{cases}
$$
The set of solutions to (\ref{max_sink}--\ref{min_non_sink}) coincides with the set of $\delta \in \Delta$ such that $\rho(\delta) = \delta$. It remains to notice that $\rho$ is $\lambda$-contracting with respect to $\|\cdot\|_\infty$-norm. 

Now let us explain how to find the solution to (\ref{max_sink}--\ref{min_non_sink}) in strongly polynomial time. In fact, the algorithm will be independent of the value of $\lambda$. Let us first determine for every $k\in\{0, 1, \ldots, n - 1\}$ the set $V_k = \{v \in V\mid \delta(v) = \lambda^k\}$. It is clear that $V_0$ coincides with the set of sinks of the graph $G^\prime$ that lie in $V_\Min$. Next, the set $V_k$ can be determined in strongly polynomial time provided $V_0, V_1, \ldots, V_{k - 1}$ are given. Indeed, by (\ref{max_non_sink}--\ref{min_non_sink}) the set $V_k$ consists of 
\begin{itemize}
\item all $v\in V_\Max\setminus V_{<k}$ that have an out-going edge leading to $V_{<k}$;
\item all $v\in V_\Min\setminus V_{<k}$ such that all edges starting at $v$ lead to $V_{<k}$.
\end{itemize}
Here $V_{<k} = V_0 \cup V_1 \cup\ldots \cup V_{k - 1}$. In this way we determine all the sets $V_0, V_1, \ldots, V_{n - 1}$. Similarly, one can determine all the nodes with $\delta(v) < 0$, and also the exact value of $\delta$ in these nodes.  All the remaining nodes satisfy $\delta(v) = 0$.
\end{proof}

\subsection{Optimality polyhedron}
\label{subsec:opt_pol}

By the
\emph{optimality polyhedron} of the discounted game $\mathcal{G}$ we mean the set of all $x\in \mathbb{R}^V$, satisfying the following inequalities:
\begin{align}
\label{max_pol}
x_a &\ge  w(e) + \lambda x_b  \qquad \mbox{ for } (a, b) \in E, a\in V_\Max,\\
 \label{min_pol}
x_a &\le w(e) + \lambda x_b  \qquad \mbox{ for } (a, b) \in E, a\in V_\Min.
\end{align}
We denote the optimality polyhedron by $\OptPol$.
Note that the solution to the optimality equations (\ref{max}--\ref{min}) belongs to $\OptPol$.


We call a vector $\delta\in \mathbb{R}^V$ a \emph{feasible shift} for $x\in\OptPol$ if for all small enough $\varepsilon > 0$ the vector $x+\varepsilon \delta$ belongs to $\OptPol$. To determine whether a shift $\delta$ is feasible for $x$ it is enough to look at the edges that are \emph{tight} for $x$. Namely, we call an edge  $(a, b) \in E$  tight for $x\in\OptPol$ if $ x_a =  w(e) + \lambda x_b$, i.e., if the corresponding inequality in (\ref{max_pol}--\ref{min_pol}) becomes an  equality on $x$. It is clear that $\delta\in \mathbb{R}^V$ is feasible for $x$ if and only if 
\begin{align}
\label{valid_max}
\delta(a) \ge \lambda\delta(b) \mbox{ whenever }  (a, b) \in E, a\in V_\Max \mbox{ and $(a, b)$ is tight for $x$},\\
\label{valid_min}
\delta(a) \le \lambda\delta(b) \mbox{ whenever }  (a, b) \in E, a\in V_\Min \mbox{ and $(a, b)$ is tight for $x$}.
\end{align}
Discounted normal play games can be used to produce for any $x\in\OptPol$  a feasible shift for $x$. Namely, let $E_x \subseteq E$ be the set of edges that are tight for $x$ and consider the graph $G_x = (V, E_x)$. I.e., $G_x$ is a subgraph of $G$ containing only edges that are tight for $x$. An important observation is that $x$ is the solution to optimality equations  (\ref{max}--\ref{min}) if and only if in $G_x$ there are no sinks.

Define $\delta_x$ to be the solution to (\ref{max_sink}--\ref{min_non_sink}) for $G_x$. It is easy to verify that the conditions (\ref{max_non_sink}--\ref{min_non_sink}) for $\delta_x$ imply (\ref{valid_max}--\ref{valid_min}), so $\delta_x$ is a feasible shift for $x$. Not also that as long as $x$ does not satisfy  (\ref{max}--\ref{min}), i.e., as long as the graph $G_x$ has sinks, the vector $\delta_x$ is not zero.

Let us also define a procedure $RealizeGraph(S)$ that we use in our algorithm to control the bit-length of the current point. In the definition of $RealizeGraph(S)$ we rely on the following result from~\cite{megiddo1983towards}. There exists a strongly polynomial-time algorithm $A$ that, given a system of linear inequalities with two variables per inequality, outputs a feasible solution to the system if the system is feasible, and outputs ``not found'' if the system is infeasible. We take any such $A$ and use it in the definition of $RealizeGraph(S)$. Namely, the input to the procedure $RealizeGraph(S)$ is a subset $S\subseteq E$. The output of $RealizeGraph(S)$ is the output of $A$ on a system that can be obtained from (\ref{max_pol}--\ref{min_pol}) by turning inequalities corresponding to edges from $S$ into equalities.

By definition, if the output of $RealizeGraph(S)$ is not ``not found'', then its output is a point $x\in\OptPol$ satisfying $S\subseteq E_x$. If the output of $RealizeGraph(S)$ is ``not found'', then there is no such point. Another important feature of $RealizeGraph(S)$ is that the bit-length of its output is always polynomially bounded. Indeed, its output coincides with an output of a strongly polynomial-time algorithm $A$ on a polynomially bounded input.

\subsection{The algorithm}
\label{subsec:alg}
\begin{algorithm}[H]
\label{disc_alg}
\SetAlgoLined
\KwResult{The solution to optimality equations (\ref{max}--\ref{min}) }
 initialization: $x = RealizeGraph(\emptyset)$\;
 \While{$x$ does not satisfy (\ref{max}--\ref{min}) }{
Compute $\delta_x$ using Proposition \ref{dnpg_values}\;
$\varepsilon_{max}\gets$ the largest $\varepsilon\in (0, +\infty)$ s.t $x + \varepsilon \delta_x\in\OptPol$\;
$x\gets RealizeGraph(E_{x + \varepsilon_{max}\delta_x})$\;
 }
output $x$\;
 \caption{$n^{O(1)} \cdot (2 + \sqrt{2})^n$-time algorithm for discounted games}
\end{algorithm}

Some remarks: 
\begin{itemize}
\item the value of $\varepsilon_{max}$ can be found as in the simplex-method. Indeed, $\varepsilon_{\max}$ is the smallest $\varepsilon\in(0,+\infty)$ for which there exists an inequality in (\ref{max_pol}--\ref{min_pol}) which is tight for $x + \varepsilon \delta_x$ but not for $x$. Thus, to find $\varepsilon_{max}$ it is enough to solve at most $m$ linear one-variable equations and compute the minimum over positive solutions to these equations.

\item in fact,  $\varepsilon_{\max} < +\infty$ throughout the algorithm, i.e, we cannot move along $\delta_x$ forever. To show this, it is enough  to indicate $\varepsilon > 0$ and an inequality in (\ref{max_pol}--\ref{min_pol}) which is tight for $x + \varepsilon \delta_x$ but not for $x$. First, since  $x$ does not  yet satisfy the optimality equations (\ref{max}--\ref{min}), there exists a sink $s$ of the graph $G_x$. Assume that $s\in V_{\Max}$, the argument in the case $s\in V_\Min$ is similar.
In the graph $G$ every node has an out-going edge, so there exists an edge $e = (s, b)\in E$. The edge $(s, b)$ is not tight for $x$ (otherwise $s$ is not a sink of $G_x$). Hence $x_s > w(e) + \lambda x_b$.  The same inequality for $x + \varepsilon \delta_x$ looks as follows:
$$x_s + \varepsilon\delta_x(s) \ge w(e) +  \lambda x_b + \varepsilon\lambda\delta_x(b)$$
Since the node $s$ is a sink of $G_x$ from $V_\Max$, we have $\delta_x(s) = -1 < \lambda \delta_x(b)$. Therefore, the left-hand side of the last inequality decreases in $\varepsilon$ faster than the right-hand side.  So for some positive $\varepsilon$ the left-hand and the right-hand side will become equal. This will be  $\varepsilon$for which the edge $(s, b)$ is tight for $x + \varepsilon \delta_x$.

\item The procedure $RealizeGraph$ can never output ``not found'' in the algorithm. Indeed, we always run it on a set of the form $S = E_y$ for some $y\in \OptPol$. Of course, for such $S$ there exists a point $x\in\OptPol$ such that $S\subseteq E_x$ -- for example, the point $y$ itself.

In addition, note that $x$ is always an output of the procedure $RealizeGraph$, so, as we discussed above, its bit-length is polynomially bounded throughout the algorithm.


\end{itemize}

\section{Discounted games: complexity analysis}
\label{sec:disc_complexity}

Let $x_0, x_1, x_2,\ldots$ be a sequence of points from $\OptPol$ that arise in the Algorithm \ref{disc_alg}. The argument consists of two parts:
\begin{itemize}
\item first, we show that the sequence of graph $G_{x_0}, G_{x_1}, G_{x_2}\ldots$ can be obtained in an abstract process that we call discounted normal play games iteration (DNP games iteration for short), see Subsection \ref{subsec:dnp_can};
\item second, we show that any sequence of $n$-node graphs that can be obtained in a DNP games iteration has length $O(n (2 + \sqrt{2})^n)$, see Subsection \ref{subsec:dnp_length}.
\end{itemize}
This will establish Theorem \ref{thm:disc}. In Subsection \ref{subsec:suggestion} we explain a suggestion of one of the reviewers of this paper, improving a bound on the length of a DNP games iteration to $n^{O(1)} \cdot 3.214^n$. 

 As for Theorem \ref{thm:bip_disc}, note that if $G$ is bipartite, then so are $G_{x_0}, G_{x_1}, G_{x_2}$, and so on. Thus, it is enough to demonstrate that:
\begin{itemize}
\item any sequence of \emph{bipartite} $n$-node graphs that can be obtained in a DNP games iteration has length $O(2^n)$, see Subsection \ref{subsec:dnp_bip}.
\end{itemize}

First of all, we have to give a definition of a DNP games iteration (Subsection \ref{subsec:dnp_def}).

\subsection{Definition of a DNP games iteration}
\label{subsec:dnp_def}
Consider a directed graph $H = (V = V_\Max\sqcup V_\Min, E_H)$ and let $\delta_H$ be the solution to (\ref{max_sink}--\ref{min_non_sink}) for $H$. We say that the edge $(a, b)\in E_H$ is \emph{optimal} for $H$ if $\delta_H(a) = \lambda\delta_H(b)$. Next, we say that a pair $(a, b)\in V\times V$ is \emph{violating} for $H$ if one of the following two conditions holds:
\begin{itemize}
\item $a\in V_{\Max}$ and $\delta_H(a) < \lambda \delta_H(b)$;
\item $a\in V_{\Min}$ and $\delta_H(a) > \lambda \delta_H(b)$.
\end{itemize}
Note that a violating pair of nodes cannot be an edge of $H$ because of (\ref{max_non_sink}--\ref{min_non_sink}).

Consider another directed graph $K = (V = V_\Max\sqcup V_\Min, E_K)$ over the same set of nodes as $H$, and with the same partition $V = V_\Max\sqcup V_\Min$. We say that  \emph{$K$ can be obtained from $H$ in one step of DNP games iteration} if the following two conditions hold:
\begin{itemize}
\item any optimal edge of $H$ is in $E_K$;
\item there is a pair of nodes in $E_K$ which is violating for $H$.
\end{itemize}
I.e., to obtain $K$ we can first erase some  (not necessarily all) non-optimal edges of $H$, and then we can add some edges that are not in $H$, in particular, we \emph{must} add at least one violating pair.

Finally, we say that a sequence of graph $H_0, H_1, \ldots, H_j$ \emph{can be obtained in a DNP games iterations} if for all $i\in\{0, 1, \ldots, j - 1\}$ the graph $H_{i + 1}$ can be obtained from $H_i$ in one step of DNP games iteration.

\subsection{Why the sequence $G_{x_0}, G_{x_1}, G_{x_2}, \ldots$ can be obtained in DNP games iteration}
\label{subsec:dnp_can}
Let $x$ and $x^\prime = RealizeGraph(E_{x+\varepsilon_{max}\delta_x})$ be two consecutive points of $\OptPol$ in the algorithm. We have to show that the graph $G_{x^\prime}$ can be obtained from $G_x$ in one step of DNP games iteration. By definition of the procedure $RealizeGraph$ the graph $G_{x^\prime}$ contains all edges of the graph $G_y$, where $y =  x + \varepsilon_{max}\delta_x$. Hence it is enough to show the following:
\begin{enumerate}[label=\textbf{(\alph*)}]
\item all the edges of the graph $G_x$ that are optimal for $G_x$ are also in the graph $G_y$;
\item there is an edge of the graph $G_y$ which is a violating pair for the graph $G_x$.
\end{enumerate}

\bigskip

\textbf{Proof of (a).} Take any edge $(a, b)$ of the graph $G_x$ which is optimal for $G_x$. The corresponding inequality in (\ref{max_pol}--\ref{min_pol}) turns into an equality on $x$. Now, consider the same inequality for the point $y = x + \varepsilon_{max} \delta_x$. Its left-hand side will be bigger by $\varepsilon_{max}\cdot \delta_x(a)$, and its right-hand side will be bigger by $\varepsilon_{max} \cdot \lambda \delta_x(b)$. Since $(a, b)$ is optimal for $G_x$, these two quantities are equal. So the left-hand and the right-hand side will still be equal on $y$. Hence $(a, b)$ belongs to $G_y$.


\bigskip

\textbf{Proof of (b).} In fact, any edge of the graph $G_y$ which is not in the graph $G_x$ is a violating pair for $G_x$. Indeed,  assume that $(a, b)\in E$ is an edge of $G_y$ but not of $G_x$. Consider an inequality in (\ref{max_pol}--\ref{min_pol}) corresponding to the edge $(a, b)$. By substituting $y$ there we obtain an equality, and by substituting $x$ there we obtain a strict inequality. Subtract one from another. This will give us  $\varepsilon_{max}\cdot \delta_x(a) < \varepsilon_{max} \cdot \lambda \delta_x(b)$ if $a\in V_\Max$ and $\varepsilon_{max}\cdot \delta_x(a) > \varepsilon_{max} \cdot \lambda \delta_x(b)$ if $a\in V_\Min$. This means that $(a, b)$ is a violating pair for $G_x$.

%
%
%

It only remains to note that there \emph{exists} an edge of $G_y$ which is not an edge of $G_x$. Indeed, otherwise all inequalities that are tight for $y = x + \varepsilon_{max}\delta_x$ were tight already for $x$. Then $\varepsilon_{max}$  could be increased, contradiction.

%

\subsection{$O(n (2 + \sqrt{2})^n)$ bound on the length of DNP games iteration}
\label{subsec:dnp_length}
The argument has the following structure. 
\begin{itemize}
\item \emph{Step 1.} For a directed graph $H = (V, E_H)$ we define two vectors $f^H, g^H\in\mathbb{N}^{2n - 1}$. 

\item \emph{Step 2.} We define the \emph{alternating lexicographic ordering} -- this will be a linear ordering on the set $\mathbb{N}^{2n - 1}$.

\item \emph{Step 3.} We show that in each step of a DNP games iteration \textbf{\emph{(a)}} neither $f^H$ nor $g^H$ decrease and \textbf{\emph{(b)}} either $f^H$ or $g^H$  increase (in the alternating lexicographic ordering).

\item \emph{Step 4.} We bound the number of values $f^H$ and $g^H$ can take (over all directed graph $H$ with $n$ nodes). By step 3 this  bound  (multiplied by 2) is also a bound on the length of a DNP games iteration.

\end{itemize}

\bigskip

\emph{Step 1.}
 The first coordinate of the vector $f^H$ equals the number of nodes with $\delta_H(a) = 1$ (all such nodes are from $V_\Min$). The other $2n - 2$ coordinates are divided into $n - 1$ consecutive pairs. In the $i$th pair we first have the number of nodes from $V_\Max$ with $\delta_H(a) = \lambda^i$, and then the number of nodes from $V_\Min$ with $\delta_H(a) = \lambda^i$. 

The vector $g^H$ is defined similarly, with the roles of $\Max$ and $\Min$ and $+$ and $-$ reversed. The first coordinate of $g^H$ equals the number of nodes with $\delta_H(a) = -1$ (all such nodes are from $V_\Max$). The other $2n - 2$ coordinates are divided into $n - 1$ consecutive pairs. In the $i$th pair we first have the number of nodes from $V_\Min$ with $\delta_H(a) = -\lambda^i$, and then the number of nodes from $V_\Max$ with $\delta_H(a) = -\lambda^i$. 

\bigskip

\emph{Step 2.}
The alternating lexicographic ordering  is a lexicographic order obtained from the standard ordering of integers in the even coordinates and from the reverse of the standard ordering of integers in the odd coordinates. I.e., we say that a vector $u\in\mathbb{N}^{2n - 1}$ is smaller than a vector $v\in \mathbb{N}^{2n - 1}$ in the alternating lexicographic order if there exists $i\in \{1, 2, \ldots, 2n - 1\}$ such that $u_j = v_j$ for all $1\le j < i$ and 
$$\begin{cases}u_{i} > v_i &\mbox{ if $i$ is odd}, \\ u_i < v_i & \mbox{ if $i$ is even}. \end{cases}$$

For example,
$$(3, 2, 3) < (2, 3, 2), \qquad (2, 3, 1) > (2, 2, 7),$$
in the alternating lexicographic order on $\mathbb{N}^3$.

\bigskip

\emph{Step 3.} This step relies on the following

\begin{lemma}
\label{main_lemma}
 Assume that a graph $H_2$ can be obtained from a graph $H_1$ in one step of a DNP games iteration. Then
\begin{enumerate}[label=\textbf{(\alph*)}]
\item if for some $i\in\{0,1, \ldots, n - 1\}$ it holds that $\{a \in V\mid \delta_{H_1}(a) = \lambda^i\} \neq \{a \in V\mid \delta_{H_2}(a) = \lambda^i\}$, then $f^{H_2}$ is greater than $f^{H_1}$ in the alternating lexicographic order.
\item if for some $i\in\{0,1, \ldots, n - 1\}$ it holds that $\{a \in V\mid \delta_{H_1}(a) = -\lambda^i\} \neq \{a \in V\mid \delta_{H_2}(a) = -\lambda^i\}$, then $g^{H_2}$ is greater than $g^{H_1}$ in the alternating lexicographic order.
\end{enumerate}
\end{lemma}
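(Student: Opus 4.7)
The plan is to prove part (a) in detail and obtain part (b) by the obvious symmetric argument. Assuming the hypothesis of (a), let $k^*$ be the smallest index $i \ge 0$ with $\{a : \delta_{H_1}(a) = \lambda^i\} \neq \{a : \delta_{H_2}(a) = \lambda^i\}$; denote these sets by $V_i^{H_j}$. By minimality, $V_i^{H_1} = V_i^{H_2}$ for all $i < k^*$, and hence $T := V_{<k^*}^{H_1} = V_{<k^*}^{H_2}$ (level by level). The coordinates of $f^{H_1}$ and $f^{H_2}$ recording levels strictly below $k^*$ coincide, so the first coordinate where they can differ is index $1$ (if $k^* = 0$) or one of $\{2k^*,\, 2k^*+1\}$ (if $k^* \ge 1$). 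I will show that this first disagreement always goes in the direction that the alternating lexicographic order prefers for $f^{H_2}$: a strict increase of the Max count at $\lambda^{k^*}$ (even coordinate), or failing that, a strict decrease of the Min count at $\lambda^{k^*}$ (odd coordinate).

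The argument reduces to two inclusions at level $k^*$. First, $V_{k^*}^{H_1} \cap V_\Max \subseteq V_{k^*}^{H_2} \cap V_\Max$: any $v \in V_{k^*}^{H_1} \cap V_\Max$ has, by \eqref{max_non_sink}, an outgoing $H_1$-edge attaining the max, landing in $V_{k^*-1}^{H_1} \subseteq T$, which is therefore optimal; optimal edges survive into $H_2$, and combined with $v \notin T = V_{<k^*}^{H_2}$ the recursive characterization of the $V_k$ sets from the proof of Proposition~\ref{dnpg_values} puts $v$ in $V_{k^*}^{H_2}$. Second, $V_{k^*}^{H_2} \cap V_\Min \subseteq V_{k^*}^{H_1} \cap V_\Min$: this reduces to showing $\delta_{H_1}(v) = \lambda^{k^*}$ for any Min $v \in V_{k^*}^{H_2}$. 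Since $v \notin T$, the possible values are $\{\lambda^j : j \ge k^*\} \cup \{0\} \cup \{-\lambda^j : j \ge 1\}$; for each option other than $\lambda^{k^*}$, equation \eqref{min_non_sink} forces an optimal outgoing $H_1$-edge $(v, c^*)$ whose target has $\delta_{H_1}(c^*) \notin \{1,\lambda,\ldots,\lambda^{k^*-1}\}$, so $c^* \notin T$. This optimal edge is preserved in $H_2$, giving $v$ an $H_2$-outgoing edge that exits $T$ and contradicting the Min part of the recurrence for $V_{k^*}^{H_2}$.

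With the two inclusions in hand, the case $k^* = 0$ is immediate: because every Min non-sink of $H_1$ retains its preserved optimal edge and therefore remains non-sink in $H_2$, we have $V_0^{H_2} \subsetneq V_0^{H_1}$, a strict decrease at the odd coordinate $1$. For $k^* \ge 1$, either the Max parts of $V_{k^*}$ differ (then the first inclusion forces a strict increase at the even coordinate $2k^*$), or the Max parts agree and the Min parts must then differ (since $V_{k^*}^{H_1} \ne V_{k^*}^{H_2}$ as a whole), at which point the second inclusion turns that into a strict decrease at the odd coordinate $2k^*+1$. In every case $f^{H_2} > f^{H_1}$ in the alternating lexicographic order. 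Part (b) follows by the symmetric argument with $V_\Max \leftrightarrow V_\Min$ and positive values replaced by their negatives: the dual recurrence for $U_k := \{a : \delta_{H}(a) = -\lambda^k\}$ yields the analogous inclusions $U_{k^*}^{H_2} \cap V_\Max \subseteq U_{k^*}^{H_1} \cap V_\Max$ and $U_{k^*}^{H_1} \cap V_\Min \subseteq U_{k^*}^{H_2} \cap V_\Min$, lining up with the alternating ordering used in $g$.

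The delicate step is the second inclusion of part (a): for every forbidden value of $\delta_{H_1}(v)$, one must exhibit a specific optimal $H_1$-edge out of $v$ whose target is certifiably outside $T$. This is a finite check over the discrete spectrum $\{\pm\lambda^i\} \cup \{0\}$ of $\delta_H$, but the bookkeeping must be done carefully, because the persistence of that concrete optimal edge is precisely what rules out the value $\delta_{H_1}(v)$ in question.
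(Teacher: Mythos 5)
Your proof is correct and follows essentially the same route as the paper's: take the minimal differing level $j=k^*$, use preservation of optimal edges to show the Max part of that level can only grow while the Min part can only shrink, and conclude via the alternating lexicographic order. The only cosmetic difference is that you route both inclusions through the recursive level-set characterization from the proof of Proposition~\ref{dnpg_values}, whereas the paper argues directly from (\ref{max_non_sink}--\ref{min_non_sink}) together with the minimality of $j$; the substance is the same.
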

Assume Lemma \ref{main_lemma} is proved.
\begin{itemize}
\item \textbf{Why neither $f^H$ nor $g^H$ can decrease?} If $f^{H_2}$ does not exceed $f^{H_1}$ in the alternating lexicographic order, then $\{a \in V\mid \delta_{H_1}(a) = \lambda^i\} = \{a \in V\mid \delta_{H_2}(a) = \lambda^i\}$ for every $i\in\{0, 1, \ldots, n - 1\}$ by Lemma \ref{main_lemma}. On the other hand, $f^{H_1}$ and $f^{H_2}$ are determined by these sets, so $f^{H_1} = f^{H_2}$. Similar argument works for $g^{H_1}$ and $g^{H_2}$ as well.

\item \textbf{Why either $f^H$ or $g^H$ increase?} Assume that neither $f^{H_2}$ is greater than $f^{H_1}$ nor $g^{H_2}$ is greater than $g^{H_1}$ in the alternating lexicographic order. By Lemma \ref{main_lemma} we have for every $i\in\{0, 1, \ldots, n - 1\}$ that $\{a \in V\mid \delta_{H_1}(a) = \lambda^i\} = \{a \in V\mid \delta_{H_2}(a) = \lambda^i\}$ and $\{a \in V\mid \delta_{H_1}(a) = -\lambda^i\} = \{a \in V\mid \delta_{H_2}(a) = -\lambda^i\}$. This means that the functions $\delta_{H_1}$ and $\delta_{H_2}$ coincide. On the other hand, there is an edge of $H_2$ which is violating for $H_1$, contradiction. 
\end{itemize}
We now proceed to a proof of Lemma \ref{main_lemma}. Let us stress that in the proof we do not use the fact that $H_2$ contains a violating pair for $H_1$. We only use the fact that $H_2$ contains all optimal edges of $H_1$.
\begin{proof}[Proof of Lemma \ref{main_lemma}]
We only prove \textbf{\emph{(a)}}, the proof of \textbf{\emph{(b)}} is similar. Let $j$ be the smallest element of $\{0, 1, \ldots, n - 1\}$ for which $\{a \in V\mid \delta_{H_1}(a) = \lambda^j\} \neq \{a \in V\mid \delta_{H_2}(a) = \lambda^j\}$. First consider the case $j = 0$. We claim that in this case the first coordinate of $f^{H_2}$ is smaller than the first coordinate of $f^{H_1}$. Indeed, $f^{H_1}_1$ is the number of sinks from $V_\Min$ in the graph $H_1$. In turn, $f^{H_2}_1$ is the number of sinks from $V_\Min$ in the graph $H_2$. On the other hand, all sinks of $H_2$ are also sinks of $H_1$. Indeed, nodes that are not sinks of $H_1$ have in $H_1$ an out-going optimal edge. All these edges are also in $H_2$. Hence $f^{H_2}_1 \le f^{H_1}_1$.  The equality is not possible because otherwise $\{a \in V\mid \delta_{H_1}(a) = 1\} = \{a \in V\mid \delta_{H_2}(a) = 1\}$, contradiction with the fact that $j = 0$.

Now assume that $j > 0$. Then the sets $\{v\in V\mid \delta_{H_1}(v) = \lambda^j\}$ and $\{v\in V\mid \delta_{H_2}(v) = \lambda^j\}$ are distinct.  There are two cases:
\begin{itemize}
\item \emph{First case}: $\{v\in V_\Max\mid \delta_{H_1}(v) = \lambda^j\} \neq \{v\in V_\Max\mid \delta_{H_2}(v) = \lambda^j\}$.

\item \emph{Second case}: $\{v\in V_\Min\mid \delta_{H_1}(v) = \lambda^j\} \neq \{v\in V_\Min\mid \delta_{H_2}(v) = \lambda^j\}$.  
\end{itemize}
We will show that  in the first case we have $f^{H_1}_{2j} < f^{H_2}_{2j}$, and in the second case we have $f^{H_1}_{2j + 1} > f^{H_2}_{2j + 1}$. This would prove that  $f^{H_2}$ exceeds $f^{H_1}$ in alternating lexicographic order (recall that the first $1 + 2(j - 1)$ coordinates of $f^{H_1}$ and $f^{H_2}$ coincide by minimality of $j$, and note that they also coincide in the $(2j)$th coordinate whenever the first case does not hold).


\textbf{Proving $f^{H_1}_{2j} < f^{H_2}_{2j}$ in the first case.} By definition, $f^{H_1}_{2j}$ is the size of the set $\{v\in V_\Max\mid \delta_{H_1}(v) = \lambda^j\}$, and $f^{H_2}_{2j}$ is the size of the set $\{v\in V_\Max\mid \delta_{H_2}(v) = \lambda^j\}$. So it is enough to show that the first set is a subset of the second set (in fact, it would be a strict subset because we already know that these sets are distinct).

In other words, it is enough to show that for any $a\in V_\Max$ with $\delta_{H_1}(a) = \lambda^j$ we also have $\delta_{H_2}(a) = \lambda^j$. By (\ref{max_non_sink}--\ref{min_non_sink}) there is an edge $(a,b)$ of the graph $H_1$ with $\delta_{H_1}(b) = \lambda^{j - 1}$. We also have that $\delta_{H_2}(b) = \lambda^{j - 1}$, because $\{v\in V\mid \delta_{H_1}(v) = \lambda^{j-1}\} = \{v\in V\mid \delta_{H_2}(v) = \lambda^{j-1}\}$ (due to minimality of $j$). On the other hand, since $\delta_{H_1}(a) = \lambda \delta_{H_1}(b)$, the edge $(a,b)$ is optimal for $H_1$. Hence this edge is also in the graph $H_2$. So in the graph $H_2$ there is an edge from $a\in V_{\Max}$ to a node $b$ with $\delta_{H_2}(b) = \lambda^{j - 1}$. Hence by \eqref{max_non_sink} we have $\delta_{H_2}(a) \ge \lambda^j$. It remains to show why it is impossible that $\delta_{H_2}(a) > \lambda^{j}$. Indeed, by the minimal choice of $j$, the sets $\{v \in V \mid \delta_{H_1}(v) > \lambda^j\}$ and $\{v \in V \mid \delta_{H_2}(v) > \lambda^j\}$ are the same (and $a$ by definition is not in the first set).


\textbf{Proving $f^{H_1}_{2j + 1} > f^{H_2}_{2j + 1}$ in the second case.} By the same argument, it is enough to show that any $a\in V_{\Min}$ with $\delta_{H_2}(a) = \lambda^j$ also satisfies $\delta_{H_1}(a) = \lambda^j$. It is clear that $\delta_{H_1}(a) \le \lambda^{j}$, because otherwise for some $i < j$ we would have that the sets  $\{v\in V\mid \delta_{H_1}(v) = \lambda^i\}$ and  $\{v\in V\mid \delta_{H_2}(v) = \lambda^i\}$ are distinct ($a$ would belong to the first set and not to the second one). This would give us a contradiction with the minimality of $j$. Thus, it remains to show that $\delta_{H_1}(a) \ge \lambda^j$. Assume that this is not the case, i.e., $\delta_{H_1}(a) \le \lambda^{j + 1}$. Since $a\in V_\Min$, the node $a$ is not a sink of $H_1$, as the value in a Min's sink is $1 > \lambda^{j + 1}$. Hence by \eqref{min_non_sink} there exists an edge $(a, b)$ in the graph $H_1$ with $\delta_{H_1}(b) = \delta_{H_1}(a)/\lambda \le \lambda^j$. Then we also have that $\delta_{H_2}(b) \le \lambda^j$, because by minimality of $j$ we have $\{v\in V\mid \delta_{H_1}(v) \ge \lambda^{j - 1}\} = \{v\in V\mid \delta_{H_2}(v) \ge \lambda^{j - 1}\}$ and hence $\{v\in V\mid \delta_{H_1}(v) \le \lambda^{j}\} = \{v\in V\mid \delta_{H_2}(v) \le \lambda^{j}\}$. But the edge $(a, b)$ is optimal for $H_1$, so the edge $(a, b)$ is also in the graph $H_2$. This means that in the graph $H_2$ there is an edge from $a$ to a node $b$ with $\delta_{H_2}(b) \le \lambda^j$. Hence by \eqref{min_non_sink} we have $\delta_{H_2}(a) \le \lambda^{j + 1}$, contradiction.
\end{proof}

\bigskip

\emph{Step 4.} Notice  that $f^H$ and $g^H$ belong to the set of all vectors $v\in\mathbb{N}^{2n - 1}$ satisfying:
\begin{align}
\label{sum}
\|v\|_1 &\le n,\\
\label{a}
v_1 = 0 &\implies v_2 = v_3 = \ldots = v_{2n - 1} = 0, \\
\label{b}
v_{2i} = v_{2i + 1} = 0 &\implies v_{2i + 2} = v_{2i + 3} = \ldots v_{2n - 1} = 0 \mbox{ for every $i\in\{1, \ldots, n - 2\}$}.
\end{align}
To see \eqref{sum} note that in our case the $l_1$-norm is just a sum of coordinates. By construction, the sum of coordinates of $f^H$ is the number of nodes with $\delta_H(a) > 0$ and the sum of coordinates of $g^H$ is the number of nodes with $\delta_H(a) < 0$.
The fact that $f^H$ satisfies (\ref{a}--\ref{b}) can be seen from the following observation: if $\{a\in V\mid \delta_H(a) =\lambda^i\} = \emptyset$,  then we also have $\{a\in V\mid \delta_H(a) =\lambda^j\} = \emptyset$ for every $j \in\{i + 1, i + 2, \ldots, n - 1\}$. Indeed, by (\ref{max_non_sink}--\ref{min_non_sink}) a node with $\delta_H(a) = \lambda^j$ has an edge leading to a node with $\delta_H(b) = \lambda^{j - 1}$. By continuing in this way we would reach a node with $\delta_H(a) = \lambda^i$, contradiction.

Thus, the desired upper bound on the length of a DNP games iteration follows from the following technical lemma.

\begin{lemma}
\label{fact}
The number of vectors $v\in\mathbb{N}^{2n - 1}$ satisfying (\ref{sum}--\ref{b}) is $O(n (2 + \sqrt{2})^n)$.
\end{lemma}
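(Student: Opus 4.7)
Plan: The first step is to read off a clean structural characterisation from conditions~(\ref{a}) and~(\ref{b}): either $v$ is the zero vector, or $v_1 \ge 1$ and there is an index $k \in \{0,1,\dots,n-1\}$ such that $(v_{2i}, v_{2i+1}) \ne (0,0)$ for $1 \le i \le k$ while $v_{2j} = v_{2j+1} = 0$ for $k < j \le n-1$. In words, the ``non-zero pairs'' must form a (possibly empty) prefix of the $n-1$ pairs following $v_1$. This turns the problem into enumerating tuples $(v_1, (v_2,v_3), \dots, (v_{2k}, v_{2k+1}))$ with $v_1 \ge 1$, each listed pair having at least one positive entry, and $\|v\|_1 \le n$.

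I would then set up a generating function in which the variable $x$ marks $\|v\|_1$. The head $v_1 \ge 1$ contributes $\sum_{a\ge 1} x^a = x/(1-x)$, while a single non-zero pair of sum $b \ge 1$ admits exactly $b+1$ realisations $(v_{2i},v_{2i+1})\in\mathbb{N}^2$, giving the contribution $\sum_{b\ge 1}(b+1)x^b = x(2-x)/(1-x)^2$. Each additional pair adds at least one to $\|v\|_1$, so at degree $\le n$ only $k \le n-1$ pairs can contribute, and I may freely sum $k$ over $\mathbb{N}$:
\begin{equation*}
F(x) \;=\; \frac{x}{1-x}\sum_{k \ge 0}\left(\frac{x(2-x)}{(1-x)^2}\right)^{\!k} \;=\; \frac{x(1-x)}{1 - 4x + 2x^2}.
\end{equation*}
The total count of valid vectors is then $1 + \sum_{s=0}^{n}[x^s] F(x) = 1 + [x^n]\bigl(x/(1-4x+2x^2)\bigr)$.

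The final step is asymptotic extraction. The denominator $1-4x+2x^2$ has roots $(2\pm\sqrt{2})/2$, with reciprocals $2\pm\sqrt{2}$, so a routine partial-fraction decomposition yields $[x^n]\, x/(1-4x+2x^2) = \Theta((2+\sqrt{2})^n)$; this in fact beats the stated bound by a factor of $n$. I do not anticipate any real obstacle: the structural claim in step one is an immediate unpacking of (\ref{a}) and (\ref{b}), and the rest is mechanical generating-function manipulation. The only subtlety worth double-checking is that extending the sum over $k$ to infinity does not overcount at degree~$\le n$, which is precisely the minimum-degree observation noted above.
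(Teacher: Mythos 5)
Your proof is correct, and it takes a genuinely different (sharper) route than the paper, although both are generating-function arguments at heart. The paper does not compute the exact series: it partitions the vectors by $t(v)$, the number of nonzero pairs, and for each fixed $t$ bounds $\rho^{n}|\mathcal{A}_t|\le\sum_{v\in\mathcal{A}_t}\rho^{\|v\|_1}\le\bigl(\sum_{a\ge1}\rho^a\bigr)\bigl(\sum_{(b,c)\ne(0,0)}\rho^{b+c}\bigr)^{t}\le 1$ at the single point $\rho=1-\tfrac{1}{\sqrt2}$, giving $|\mathcal{A}_t|\le(2+\sqrt2)^n$; the extra factor $n$ in the lemma comes solely from summing over the $O(n)$ values of $t$. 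You instead use the same structural observation (the head $v_1\ge1$ and the nonzero pairs forming a prefix — which is exactly the content of (\ref{a})--(\ref{b})) to write down the exact ordinary generating function $F(x)=x(1-x)/(1-4x+2x^2)$, pass to partial sums via division by $1-x$, and extract $[x^n]\,x/(1-4x+2x^2)=\bigl((2+\sqrt2)^n-(2-\sqrt2)^n\bigr)/(2\sqrt2)=\Theta((2+\sqrt2)^n)$ by partial fractions. What you buy is an exact count and the removal of the factor $n$ (and the $\Theta$ shows this is tight for the counting problem, though not necessarily for the DNP iteration length); what the paper's cruder evaluation buys is a one-line verification that avoids any root/partial-fraction computation and transfers verbatim to the variants it needs later — the bipartite refinement with condition (\ref{add}) and the reviewer's improvement in Subsection \ref{subsec:suggestion}, where one again just evaluates a product of geometric-type sums at a suitable $\rho$. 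Your handling of the only delicate point — that summing over all $k\ge0$ does not overcount because a tuple of total weight $s\le n$ has at most $s-1\le n-1$ nonzero pairs, which fit in the available slots of $\mathbb{N}^{2n-1}$ — is correct.
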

\begin{proof}
Let $\mathcal{A}$ be the set of $v\in\mathbb{N}^{2n - 1}$ satisfying (\ref{sum}--\ref{b}). For $v\in\mathcal{A}$ let $t(v)$ be the largest $t\in\{1,2, \ldots, n - 1\}$ such that $v_{2t} + v_{2t + 1} > 0$. If there is no such $t$ at all (i.e., if $v_2 = v_3 = \ldots = v_{2n - 1} = 0$), then define $t(v) = 0$.

Let $\mathcal{A}_t = \{v \in\mathcal{A} \mid t(v) = t\}$.
We claim that $|\mathcal{A}_t| \le (2 + \sqrt{2})^n$ for any $t$.  As $t(v)$ takes only $O(n)$ values, the lemma follows. 

The size of $\mathcal{A}_0$ is $n$, so we may assume that $t > 0$. Take any $\rho\in (0, 1)$. Observe that:
\begin{align*}
\rho^n |\mathcal{A}_t| &\le \sum\limits_{v\in\mathcal{A}_t} \rho^{\|v\|_1} =\sum\limits_{v\in\mathcal{A}_t} \rho^{v_1} \cdot \rho^{v_2 + v_3} \cdot \ldots \rho^{v_{2t} + v_{2t + 1}} \\
&\le \left(\sum\limits_{v_1 = 1}^\infty \rho^{v_1}\right) \cdot \left(\sum\limits_{(v_2, v_3) \in\mathbb{N}^2 \setminus\{(0,0)\}} \rho^{v_2 + v_3}\right) \cdot \ldots \cdot \left(\sum\limits_{(v_{2t}, v_{2t + 1}) \in\mathbb{N}^2 \setminus\{(0,0)\}} \rho^{v_{2t} + v_{2t + 1}}\right)\\
&= \left(\sum\limits_{a = 1}^\infty \rho^a\right) \cdot \left(\sum\limits_{(b, c) \in\mathbb{N}^2 \setminus\{(0,0)\}} \rho^{b + c}\right)^t.
\end{align*}
Indeed, the first inequality here holds because $\|v\|_1 \le n$ by \eqref{sum} for $v\in\mathcal{A}$. The second inequality holds because for $v\in\mathcal{A}$ with $t(v) = t$ we have $v_1 > 0$ by \eqref{a} and $v_{2i} +v_{2i + 1} > 0$ for every $i\in\{1, 2, \ldots, t\}$ by \eqref{b}.

Next, notice that for $\rho = 1 - \frac{1}{\sqrt{2}}$ we have:
$$ \left(\sum\limits_{a = 1}^\infty \rho^a\right) \cdot \left(\sum\limits_{(b, c) \in\mathbb{N}^2 \setminus\{(0,0)\}} \rho^{b + c}\right)^t \le 1$$.
Indeed,
$$ \sum\limits_{a = 1}^\infty \rho^a = \frac{\rho}{1 - \rho} = \sqrt{2} - 1 < 1,$$
$$\sum\limits_{(b, c) \in\mathbb{N}^2 \setminus\{(0,0)\}} \rho^{b + c}  = \frac{1}{(1 - \rho)^2} - 1 = 1.$$
Thus, we get $\rho^n |\mathcal{A}_t| \le 1$. I.e., $|\mathcal{A}_t| \le (1/\rho)^n = (2 + \sqrt{2})^n$, as required. 
\end{proof}

\subsection{Improving the analysis to $O(3.214^n)$}
\label{subsec:suggestion}
As was noticed by one of the reviewers of this paper, our proof of Lemma \ref{main_lemma} actually proves a stronger statement, namely:
\begin{lemma}
\label{improved_main_lemma} Let $<_{Lex}$ denote the standard lexicographic order on the set $\mathbb{Z}^n$.
 Assume that a graph $H_2$ can be obtained from a graph $H_1$ in one step of DNP games iteration. Then
\begin{enumerate}[label=\textbf{(\alph*)}]
\item if for some $i\in\{0,1, \ldots, n - 1\}$ it holds that $\{a \in V\mid \delta_{H_1}(a) = \lambda^i\} \neq \{a \in V\mid \delta_{H_2}(a) = \lambda^i\}$, then:
$$(-f^{H_1}_1, f^{H_1}_2 - f^{H_1}_3, \ldots, f^{H_1}_{2n - 2} - f^{H_1}_{2n -1}) <_{Lex}(-f^{H_2}_1, f^{H_2}_2 - f^{H_2}_3, \ldots, f^{H_2}_{2n - 2} - f^{H_2}_{2n -1}).$$
\item if for some $i\in\{0,1, \ldots, n - 1\}$ it holds that $\{a \in V\mid \delta_{H_1}(a) = -\lambda^i\} \neq \{a \in V\mid \delta_{H_2}(a) = -\lambda^i\}$, then:
$$(-g^{H_1}_1, g^{H_1}_2 - g^{H_1}_3, \ldots, g^{H_1}_{2n - 2} - g^{H_1}_{2n -1}) <_{Lex}(-g^{H_2}_1, g^{H_2}_2 - g^{H_2}_3, \ldots, g^{H_2}_{2n - 2} - g^{H_2}_{2n -1}).$$
\end{enumerate}
\end{lemma}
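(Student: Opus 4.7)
My plan is to observe that the existing proof of Lemma \ref{main_lemma} already establishes a stronger per-level statement, so the improved lemma requires only a repackaging of what is there rather than a new argument. Consider part (a); let $j$ be the smallest index in $\{0, 1, \ldots, n-1\}$ at which the sets $\{a : \delta_{H_1}(a) = \lambda^i\}$ and $\{a : \delta_{H_2}(a) = \lambda^i\}$ differ. By minimality these sets agree for every $i < j$, and intersecting with $V_\Max$ and $V_\Min$ gives $f^{H_1}_k = f^{H_2}_k$ for every $k$ with $1 \le k \le 2j - 1$. Hence the two difference vectors appearing in the statement of Lemma \ref{improved_main_lemma}(a) agree in coordinates $1, \ldots, j$, and it remains to prove a strict lexicographic comparison at coordinate $j + 1$.

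For $j = 0$ the argument is immediate: coordinate~$1$ of the new vector is $-f^H_1$, and the original proof already shows that every sink of $H_2$ is a sink of $H_1$ (because $H_2$ retains every optimal edge of $H_1$), so the strict inequality $f^{H_1}_1 > f^{H_2}_1$ yields the strict inequality $-f^{H_1}_1 < -f^{H_2}_1$ on the first coordinate.

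For $j \ge 1$ the key step is to re-examine the existing case split and notice that each of its two sub-arguments, once isolated, uses only the minimality of $j$ and is therefore valid unconditionally. The Max-half argument in the current proof actually shows that every $a \in V_\Max$ with $\delta_{H_1}(a) = \lambda^j$ satisfies $\delta_{H_2}(a) = \lambda^j$, giving the inclusion $\{v \in V_\Max : \delta_{H_1}(v) = \lambda^j\} \subseteq \{v \in V_\Max : \delta_{H_2}(v) = \lambda^j\}$ and hence $f^{H_1}_{2j} \le f^{H_2}_{2j}$. Symmetrically, the Min-half argument shows that every $a \in V_\Min$ with $\delta_{H_2}(a) = \lambda^j$ satisfies $\delta_{H_1}(a) = \lambda^j$, giving $\{v \in V_\Min : \delta_{H_2}(v) = \lambda^j\} \subseteq \{v \in V_\Min : \delta_{H_1}(v) = \lambda^j\}$ and hence $f^{H_1}_{2j+1} \ge f^{H_2}_{2j+1}$. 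By the hypothesis on $j$, at least one of the Max- or Min-halves of the level-$j$ sets must actually differ, so at least one of these two inclusions is strict; consequently
\[
f^{H_1}_{2j} - f^{H_1}_{2j+1} \;<\; f^{H_2}_{2j} - f^{H_2}_{2j+1},
\]
which is the required strict increase at coordinate $j + 1$. Part (b) follows by the same argument with the roles of $\Max$ and $\Min$ exchanged and the signs of $\lambda^i$ flipped, working with $g^H$ in place of $f^H$.

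The only real work is bookkeeping: one must re-read the existing case-by-case argument carefully to confirm that neither of the two inclusions above secretly relies on being in ``its'' case rather than merely on the minimality of $j$. Both arguments pick an optimal $H_1$-edge $(a,b)$ to a node whose $\delta$-value is frozen across $H_1$ and $H_2$ by minimality, and then use the $V_\Max / V_\Min$ update rule together with minimality once more to pin down $\delta_{H_2}(a)$ or $\delta_{H_1}(a)$; nowhere is it needed which half of the level-$j$ set is the one that differs. Beyond this verification, no new ingredient from game theory or from the definition of a DNP games iteration is required.
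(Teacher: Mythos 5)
Your proposal is correct and matches the paper's (largely implicit) argument: the paper only remarks that the proof of Lemma \ref{main_lemma} already yields the stronger statement, and your verification—that the Max-half and Min-half sub-arguments rely only on the minimality of $j$ and on preservation of optimal edges, giving $f^{H_1}_{2j}\le f^{H_2}_{2j}$ and $f^{H_1}_{2j+1}\ge f^{H_2}_{2j+1}$ unconditionally with at least one strict—is exactly the intended bookkeeping. The handling of the $j=0$ case and the agreement of the first $j$ coordinates of the difference vectors are also as in the paper, so nothing further is needed.
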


Hence, up to a factor of $2$, the length of a DNP games iterations is bounded by the number of $u = (u_1, u_2, \ldots, u_n)\in\mathbb{Z}^n$ such that for some $(v_1, v_2, \ldots, v_{2n - 1})\in\mathbb{N}^{2n - 1}$ satisfying (\ref{sum}--\ref{b}) it holds that
$$u = (-v_1, v_2 - v_3, \ldots, v_{2n - 2} - v_{2n - 1}).$$
Now we are left with a purely combinatorial problem of bounding the number of such $u$. By using  the ``generating function'' method as in Lemma \ref{fact}, it is not hard to obtain a bound $n^{O(1)} (1/\rho)^n$, where $\rho$ is a unique root of the equation $\rho^2 + 2\rho/(1 - \rho) = 1$ in the interval $(0, 1)$.

\subsection{$O(2^n)$ bound on the length of DNP games iteration for bipartite graphs}
\label{subsec:dnp_bip}

The proof differs only in the last step, where for bipartite graphs we obtain a better bound. In more detail, if $H$ is bipartite, then $f^H$ and $g^H$ in addition to (\ref{sum}--\ref{b}) satisfy the following property:
\begin{equation}
\label{add}
u_{2i} = 0 \mbox{ for even $i\in\{1, \ldots, n - 1\}$}, \qquad u_{2i + 1} = 0 \mbox{ for odd $i\in\{1, \ldots, n - 1\}$}.
\end{equation}
Indeed, for $f^H$ the condition \eqref{add} looks as follows:
\begin{align*}
f^H_{2i} &= |\{v\in V_\Max\mid \delta_H(v) = \lambda^i\}| = 0 \mbox{ for even $i$,} \\
f^H_{2i + 1} &= |\{v\in V_\Min\mid \delta_H(v) = \lambda^i\}| = 0 \mbox{ for odd $i$.} 
\end{align*}
This holds because from a node $a$ with $\delta_H(a) = \lambda^i$ there a path of length $i$ to a node $s$ with $\delta_H(s) = 1$. If $\delta_H(s) = 1$, then $s\in V_\Min$. Since $H$ is bipartite, this means that $a\in V_\Min$ for even $i$ and $a\in V_\Max$ for odd $i$. The argument for $g^H$ is the same.

So it is enough to show that the number of $v\in\mathbb{N}^{2n - 1}$ satisfying (\ref{sum}--\ref{add}) is $O(2^n)$. Let $t(v)$ be defined in the same way as in the proof of Lemma \ref{fact}. I.e., $t(v)$ is the largest $t\in\{1, \ldots, n - 1\}$ for which $v_{2t} + v_{2t + 1} > 0$ (if there is no such $t$, we set $t(v) = 0$). Let us bound the number of $v\in\mathbb{N}^{2n - 1}$ satisfying (\ref{sum}--\ref{add}) and $\|v\|_1 = s, t(v) = t$.

For $t = 0$ the number of such $v$ is exactly $1$. Assume now that $t > 0$. Then 
\begin{align*}
v_1 &> 0, && \mbox{by \eqref{a}}\\
v_{2i} &> 0 \mbox{ and } v_{2i + 1} = 0, \mbox{ for odd $i\in\{1, \ldots, t\}$} && \mbox{by \eqref{b} and \eqref{add},}\\
v_{2i} &= 0 \mbox{ and } v_{2i + 1} > 0, \mbox{ for even $i\in\{1, \ldots, t\}$} && \mbox{by \eqref{b} and \eqref{add},}\\
v_j &= 0 \mbox{ for } j > 2t + 1, && \mbox{by definition of $t(v)$.}
\end{align*}
Hence the number of $v\in\mathbb{N}^{2n - 1}$ satisfying (\ref{sum}--\ref{add}) and $\|v\|_1 = s, t(v) = t$ is equal to the number of the solutions to the following system:
$$x_1 + x_2 + \ldots + x_{t + 1} = s, \qquad x_1, x_2, \ldots, x_{t + 1} \in \mathbb{N}\setminus\{0\}.$$
This number is $\binom{s -1}{t}$. By summing over all $s \le n$ and $t$ we get the required $O(2^n)$ bound.


\section{$n^{O(1)} \cdot 2^{n/2}$-time algorithm for energy games}

In this section we give an algorithm establishing Theorem \ref{thm:energy}. We consider an energy game $\mathcal{G}$ on a graph $G = (V, E)$ with a weight function $w\colon E\to\mathbb{R}$ and with a partition of $V$ between the players given by the sets $V_{\Max}$ and $V_{\Min}$. We assume that $G$ has $n$ nodes and $m$ edges. 

First, we notice that without loss of generality we may assume that $\mathcal{G}$ is bipartite. 

\begin{lemma}
\label{red_to_bip}
An energy game on $n$ nodes can be reduced in strongly polynomial time to a bipartite energy game on at most $n$ nodes.
\end{lemma}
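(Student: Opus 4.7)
The plan is to build a bipartite equivalent of $\mathcal{G}$ by subdividing problematic edges. For each edge $(u,v)\in E$ with $u$ and $v$ belonging to the same player, I would introduce a fresh intermediate node $m_{uv}$ assigned to the opposite player, delete $(u,v)$, and insert the two edges $(u,m_{uv})$ and $(m_{uv},v)$, assigning weight $w(u,v)$ to the first and weight $0$ to the second. Because $m_{uv}$ has out-degree one, whoever owns it has no genuine choice there, and every infinite play in the new graph corresponds to an infinite play in $\mathcal{G}$ with exactly the same running sums of weights, interleaved only with zero-weight steps. Hence the sequence of partial sums is bounded below in the new game if and only if it is in $\mathcal{G}$, and Max's winning region restricted to the original node set $V$ is preserved.

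Correctness of the reduction is therefore straightforward; the real content of the lemma is the bound of $n$ nodes. The plan is to interleave subdivision with a simplification step that removes any node whose owner has a forced move (out-degree one) by contracting it into its unique successor and pushing the associated weight onto every in-edge. Each intermediate $m_{uv}$ introduced by a subdivision has out-degree one, and a single round of contraction will absorb it back, except in cases where the absorption would itself create a new same-player edge. In the latter situation, $m_{uv}$ is kept, but its presence is charged against an original node of $\mathcal{G}$ that represented a trivial (forced) decision and can itself be eliminated during a separate preprocessing pass. All of these manipulations are local and leave Max's winning region unchanged at every surviving original node.

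The hardest part will be the accounting needed to guarantee that the net count of nodes after all subdivisions and contractions does not exceed $n$. I would set this up by a potential argument, for instance by assigning to every intermediate graph the quantity ``number of nodes plus number of same-owner edges,'' showing it is non-increasing under either operation and bounded initially by $n$ plus the non-bipartite edge count, and checking that the procedure terminates as soon as no same-player edge remains. Strong polynomiality is then immediate, since every elementary subdivision or contraction is performed in constant time and the procedure terminates in at most $O(|V|+|E|)$ elementary steps.
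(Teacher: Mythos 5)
Your reduction via subdividing same-player edges is correct as far as preserving the winner goes, but it cannot deliver the bound that is the actual content of the lemma: a bipartite game on \emph{at most $n$ nodes}. Subdivision creates one fresh node per same-player edge, so you start from $n$ plus the number of non-bipartite edges (possibly $\Theta(n^2)$) nodes, and your own potential (``nodes plus same-owner edges'') is only claimed to be non-increasing, so at best it bounds the final node count by its initial value, not by $n$. Moreover, the potential is in fact not non-increasing under your contraction step (contracting a forced node can create as many new same-owner edges as it has in-neighbours), and the plan to ``charge'' each surviving intermediate node against an original node with a forced move breaks down on simple instances: take a game in which every node belongs to Max and every node has out-degree at least two. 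There is no forced decision anywhere to charge against, contracting an intermediate node just recreates the Max-to-Max edge you subdivided, and no sequence of your local operations can make this game bipartite on $\le n$ nodes. (The node bound is not cosmetic: Theorem \ref{thm:energy} runs in $n^{O(1)}2^{n/2}$, so inflating $n$ by the edge count would ruin the complexity.)

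The paper's proof avoids adding nodes altogether. First it eliminates, in strongly polynomial time, the ``trivial'' one-player parts: nodes from which a player only ever sees his own nodes, and monochromatic cycles that are already winning for their owner; in such places the winner is found by solving a one-player energy game, and the corresponding attractor is removed, shrinking the node set. On what remains, it builds a bipartite game on the \emph{same} vertex set by replacing, for each $a\in V_\Max$ and $b\in V_\Min$, all Max-controllable paths from $a$ to $b$ by a single edge whose weight is the maximum path weight (symmetrically, minimum for Min). The preprocessing is exactly what makes this well-defined: absence of trivial cycles guarantees the maxima/minima are finite, and absence of trivial nodes guarantees every node keeps an outgoing edge. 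If you want to salvage your approach, you would have to replace subdivision by this kind of path-contraction (compressing maximal same-owner stretches into single cross edges with extremal weights), together with a preprocessing step handling the purely one-player portions, since those are precisely the configurations that neither subdivision nor local contraction can make bipartite within $n$ nodes.
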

This fact seems to be overlooked in the literature. Here is a brief sketch of it. Suppose that the pebble is in $a\in V_\Max$. After controlling the pebble for some time  $\Max$ might decide to enter a $\Min$'s node $b$. Of course, it makes sense to do it via a path of the largest weight (among all paths from $a$ to $b$ with intermediate nodes controlled by $\Max$). We can simply replace this path by a single edge from $a$ to $b$ of the same weight. Similar thing can be done with $\Min$, but now the weight should be minimized. By performing this for all pair of nodes controlled by different players we obtain an equivalent bipartite game. A full proof is given in Appendix \ref{app:b}.

To simplify an exposition we first present our algorithm for the case when the following assumption is satisfied.
\begin{assumption}
\label{as}
In the graph $G$ there are no zero cycles.
\end{assumption}
 Discussion of the general case is postponed to the end of this section.

Exposition of the algorithm follows the same scheme as for discounted games. First we define a polyhedron that we will work with. Now we call it the \emph{polyhedron of potentials}. In the algorithm we iterate the points of this polyhedron via feasible shifts. To produce feasible shifts we again use discounted normal play games. We should also modify a terminating condition. Given a point of the polyhedron of potentials satisfying our new terminating condition, one should be able to find all the nodes that are winning for  $\Max$ in our energy game. We also describe an analog of procedure $RealizeGraph$ (again used to control the bit-length of points that arise in the algorithm). All this is collected together in the Algorithm \ref{energy_alg}. Here are details.

\bigskip

The polyhedron of potentials is defined as follows:
\begin{align}
\label{max_pot}
 x_a &\ge w(e) + x_b \qquad \mbox{ for } (a, b) \in E, a\in V_\Max,\\
 \label{min_pot}
  x_a &\le w(e) + x_b \qquad \mbox{ for } (a, b) \in E, a\in V_\Min.
\end{align}
Here $x$ is an $n$-dimensional real vector with coordinates indexed by the nodes of the graph. This polyhedron is denoted by $\PolPoten$.

By setting
$$x_a = \begin{cases} W & a\in V_\Max,\\ 0 & a\in V_\Min,\end{cases}$$
for $W = \max_{e\in E} |w(e)|$ we obtain that $\PolPoten$ is not empty (here it is important that our energy game is bipartite).

We use notions similar to those we gave for the optimality polyhedron. Namely, we call an edge $e = (a, b)\in E$ tight for $x\in\PolPoten$ if $x_a = w(e) + x_b$. The set of all $e\in E$ that are tight for $x\in \PolPoten$ is denoted by $E_x$. By $G_x$ we mean the graph $(V, E_x)$. A very important consequence of the Assumption \ref{as} is that for every $x\in \PolPoten$ the graph $G_x$ is a directed \emph{acyclic} graph. Indeed, a cycle consisting of edges that are tight for $x$ would be a zero cycle, contradicting Assumption \ref{as}.

Next, we call a vector $\delta\in\mathbb{R}^n$ a feasible shift for $x\in\PolPoten$ if for all small enough $\varepsilon > 0$ it holds that $x + \varepsilon \delta\in\PolPoten$. Again, discounted normal play games on $G_x$ can be used to produce a feasible shift for $x$. Now  the discount factor in a discounted normal play game is irrelevant. We can pick an arbitrary one, say, $\lambda = 1/2$. As before, for $x\in\PolPoten$ we let $\delta_x$ be the solution to (\ref{max_sink}--\ref{min_non_sink}) for the graph $G_x$. Since the graph $G_x$ is acyclic, we have $\delta_x(a) \neq 0$ for every $a\in V$. Define $V_x^+ = \{a\in V\mid \delta_x(a) > 0\}$ and $V_x^- = \{a \in V\mid \delta_x(a) < 0\}$.

\begin{lemma}
\label{shift}
Assume that $x\in\PolPoten$ and let $\chi_x^+$ be the characteristic vector of the set $V_x^+$. Then $\chi_x^+$ is a feasible shift for $x$.
\end{lemma}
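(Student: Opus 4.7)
The plan is to reduce the claim to a simple combinatorial condition on tight edges of $G_x$, and then verify this condition directly from the defining equations of $\delta_x$.

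First I would prove the analog for $\PolPoten$ of the characterization (\ref{valid_max}--\ref{valid_min}) of feasible shifts. Namely, a vector $\delta \in \mathbb{R}^V$ is a feasible shift for $x \in \PolPoten$ if and only if for every tight edge $(a,b) \in E_x$ we have $\delta(a) \ge \delta(b)$ when $a \in V_\Max$ and $\delta(a) \le \delta(b)$ when $a \in V_\Min$. The reason is that every non-tight edge has strict slack in (\ref{max_pot}--\ref{min_pot}), which is preserved by any sufficiently small perturbation of $x$; on a tight edge $(a,b)$ the quantity $w(e) + x_b$ (resp.\ $x_a$) is cancelled after subtracting the equality and only the shift on the endpoints remains, giving the stated sign condition (with discount factor $1$, unlike the discounted case).

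Next, taking $\delta = \chi_x^+$, I would verify these two inequalities for every tight edge, using (\ref{max_sink}--\ref{min_non_sink}) for $\delta_x$ on $G_x$. Fix a tight edge $(a,b) \in E_x$. Observe first that $a$ cannot be a sink of $G_x$, since by assumption it has the outgoing edge $(a,b)$ in $E_x$. If $a \in V_\Max$ and $\chi_x^+(b) = 1$, then $\delta_x(b) > 0$, so by \eqref{max_non_sink} we get
\[
\delta_x(a) \;=\; \lambda \cdot \max_{(a,b') \in E_x} \delta_x(b') \;\ge\; \lambda\, \delta_x(b) \;>\; 0,
\]
hence $a \in V_x^+$ and $\chi_x^+(a) = 1 \ge \chi_x^+(b)$. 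Symmetrically, if $a \in V_\Min$ and $\chi_x^+(a) = 1$, then $\delta_x(a) > 0$, and by \eqref{min_non_sink} we have $\delta_x(b) \ge \min_{(a,b') \in E_x} \delta_x(b') = \delta_x(a)/\lambda > 0$, so $\chi_x^+(b) = 1 \ge \chi_x^+(a)$. In both cases the required inequality for $\chi_x^+$ on the tight edge $(a,b)$ holds (and when $\chi_x^+(b) = 0$ in the Max case, or $\chi_x^+(a) = 0$ in the Min case, the inequality is trivial since the other coordinate is in $\{0,1\}$).

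The only thing to be careful about is the sink case, which I have handled above by noting that a sink of $G_x$ has no outgoing tight edges. There is no real obstacle here — the content of the lemma is simply that the defining recursion for $\delta_x$ automatically forces the sign pattern of $\chi_x^+$ to be monotone along tight edges in the direction required by $\PolPoten$.
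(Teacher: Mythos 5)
Your proof is correct and follows essentially the same route as the paper: reduce feasibility to the sign condition on tight edges (the $\lambda=1$ analog of (\ref{valid_max}--\ref{valid_min})) and verify it for $\chi_x^+$ via \eqref{max_non_sink} and \eqref{min_non_sink}, noting that an endpoint of a tight edge is not a sink of $G_x$. The only cosmetic difference is that you argue directly while the paper phrases the same case check as a contradiction (using that $\delta_x$ is nowhere zero), so nothing substantive distinguishes the two.
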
 
\begin{proof}
Assume that $(a, b) \in E_x$. It is enough to show that $\chi_x^+(a) \ge \chi_x^+(b)$ if $a\in V_\Max$ and  $\chi_x^+(a) \le \chi_x^+(b)$ if $a\in V_\Min$.

 First, assume that $a\in V_\Max$ and $\chi_x^+(a) < \chi_x^+(b)$. Then $\chi_x^+(a) = 0$ and $\chi_x^+(b) = 1$, i.e., $\delta_x(b) > 0$ and $\delta_x(a) < 0$. But this contradicts \eqref{max_non_sink}.

Similarly, assume that  $a\in V_\Min$ and $\chi_x^+(a) > \chi_x^+(b)$. Then $\chi_x^+(a) = 1$ and $\chi_x^+(b) = 0$, i.e., $\delta_x(b) < 0$ and $\delta_x(a) >0$. This contradicts \eqref{min_non_sink}.
\end{proof}


A pair of nodes $(a, b) \in V\times V$ is called \emph{strongly violating} for $x$ if either $a\in V_\Min, \delta_x(a) > 0, \delta_x(b) < 0$ or $a\in V_\Max, \delta_x(a) < 0, \delta_x(b) > 0$. Clearly, if $(a, b)$ is strongly violating for $x$, then $(a, b)$ does not belong to $G_x$.

The following lemma specifies and justifies our new terminating condition.
\begin{lemma}
\label{correct}
Let $x\in\PolPoten$ and assume that no edge of the graph $G$ is strongly violating for $x$. Then for the energy game $\mathcal{G}$, the set $V_x^+$ is the set of nodes that are winning for Max, and the set $V_x^-$ is the set of nodes that are winning for Min. 
\end{lemma}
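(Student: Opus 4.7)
The plan is to exhibit explicit positional strategies that certify $V_x^+$ is the winning region of Max and $V_x^-$ is the winning region of Min. Assumption \ref{as} forces $G_x$ to be acyclic, hence $\delta_x$ is nowhere zero (as explained after Proposition \ref{dnpg_values}), so $V = V_x^+ \sqcup V_x^-$; it therefore suffices to produce one strategy for each player witnessing their respective halves.

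For Max, I will define $\sigma$ on $V_x^+\cap V_\Max$ as follows: such an $a$ has $\delta_x(a) > 0$, so it cannot be a sink of $G_x$ (Max-sinks satisfy $\delta_x = -1$), and hence by \eqref{max_non_sink} there exists $(a,b)\in E_x$ with $\delta_x(b) = \delta_x(a)/\lambda > 0$, i.e., $b \in V_x^+$; set $\sigma(a)=(a,b)$. On $V_x^-\cap V_\Max$ the strategy is arbitrary. The first claim to verify is that $V_x^+$ is closed under edges of $G^\sigma$: from Max nodes this is by construction; from a Min node $b \in V_x^+$, every outgoing edge $(b,c)\in E$ either is tight (and then \eqref{min_non_sink} forces $\delta_x(c) \geq \delta_x(b)/\lambda > 0$, so $c\in V_x^+$) or is non-tight, in which case $c\in V_x^+$ must still hold, for otherwise $(b,c)$ would be strongly violating, contradicting the hypothesis.

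The second claim is that every cycle in $G^\sigma$ reachable from $V_x^+$ is non-negative. Such a cycle lies entirely in $V_x^+$ by the closure claim. On each edge $(a,b)$ of the cycle, we have $w(a,b) \geq x_a - x_b$: for $a\in V_\Min$ this is \eqref{min_pot}; for $a\in V_\Max$ the edge equals $\sigma(a)$, which is tight, so equality holds. Summing around the cycle, the $x_a-x_b$ part telescopes to zero, and we obtain total weight $\geq 0$. Hence $\sigma$ certifies that every node in $V_x^+$ is winning for Max.

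The argument for Min is symmetric: define $\tau$ on $V_x^-\cap V_\Min$ by choosing, via \eqref{min_non_sink}, a tight edge into $V_x^-$. The symmetric closure argument (using the other branch of the strongly-violating hypothesis to handle Max's non-tight edges out of $V_x^-$) shows that every cycle reachable from $V_x^-$ in $G_\tau$ stays in $V_x^-$ and has total weight $\leq 0$. The one delicate point, and the only place Assumption \ref{as} is really needed, is that we need \emph{strict} negativity: since $\mathcal{G}$ has no zero cycles by assumption, a cycle of weight $\leq 0$ has weight $< 0$, and $\tau$ witnesses that every node in $V_x^-$ is winning for Min, completing the proof.
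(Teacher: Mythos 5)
Your proof is correct and follows essentially the same route as the paper's: the same strategies $\sigma$ and $\tau$ built from tight edges staying inside $V_x^+$ and $V_x^-$, the same closure argument using the absence of strongly violating edges, the same telescoping sum $\sum (w(e)+x_b-x_a)$ to bound cycle weights, and the same appeal to Assumption \ref{as} to upgrade non-positive cycles to strictly negative ones for Min. (The only cosmetic difference is that you split Min's outgoing edges into tight and non-tight cases in the closure step, where the paper invokes the strongly-violating hypothesis uniformly; and note that Assumption \ref{as} is also used, as you yourself say at the start, to make $\delta_x$ nowhere zero, not only for strict negativity.)
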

\begin{proof}
Consider a positional strategy $\sigma$ of $\Max$ defined as follows. For all $a\in V_x^+\cap V_\Max$ strategy $\sigma$ goes from $a$ by an edge $(a, b) \in E_x$ with $b\in V_x^+$. There is always such an edge because of \eqref{max_non_sink} and because there are no sinks from $V_\Max$ in $V_x^+$.  In the nodes from $V_x^-\cap V_\Max$ strategy $\sigma$ can be defined arbitrarily.

Let us also define the following positional strategy $\tau$ of $\Min$. For all $a\in V_x^-\cap V_\Min$ strategy $\tau$ goes from $a$ by an edge $(a, b) \in E_x$ with $b\in V_x^-$. Again,  such an edge exists by \eqref{min_non_sink} and since there are no sinks from $V_\Min$ in $V_x^-$.  In the nodes from $V_x^+\cap V_\Min$ strategy $\tau$ can be defined arbitrarily.

First, let us verify that for every $a\in V_x^+$ from $a$ one can reach only non-negative cycles in the graph $G^\sigma$. This would mean that the nodes from $V_x^+$ are winning for $\Max$ in $\mathcal{G}$.
First, it is impossible to reach $V_x^-$ in $G^\sigma$ from $a\in V_x^+$. Indeed, $\sigma$ does not leave $V_x^+$ by definition. In turn, an edge that starts in a Min's node from $V_x^+$ and goes to $V_x^-$ would be strongly violating for $x$. Hence it is enough to show that in the graph $G^\sigma$ every cycle consisting of nodes from $V_x^+$ is non-negative. Note that we can compute the weight of a cycle by summing up $w(e) + x_b - x_a$ over all edges $e = (a, b)$ belonging to this cycle (the terms $x_a$ cancel out). In turn, for edges of $G^\sigma$ lying inside $V_x^+$ all expressions $w(e) + x_b - x_a$ are non-negative.
Indeed, for every $e$ that starts in $V_\Min$ the expression  $w(e) + x_b - x_a$ is non-negative by \eqref{min_pot}. In turn strategy $\sigma$ uses edges of the graph $G_x$, i.e., edges that are tight for $x$. For these edges we have $w(e) + x_b - x_a = 0$.

Similarly one can show that  for every $a\in V_x^-$ from $a$ one can reach only non-positive cycles in the graph $G_\tau$. In fact, by Assumption \ref{as} there are no zero cycles, so all these cycles will be strictly negative.
\end{proof}

 We also define an analogue of the procedure $RealizeGraph(S)$ that we used in the discounted case. Its input is again a subset $S\subseteq E$.  Its output will be either a point $x\in\PolPoten$ or ``not found''. This procedure will again have the following features.
\begin{itemize}
\item if its output is not ``not found'', then its output is a point $x\in\PolPoten$ such that $S \subseteq E_x$. If its output is ``not found'', then there is no such point.
\item The bit-length of an output of $RealizeGraph(S)$ is polynomially bounded.
\end{itemize}
To achieve this, one can again run Megiddo's algorithm~\cite{megiddo1983towards} on a system obtained from (\ref{max_pot}--\ref{min_pot}) by turning inequalities corresponding to edges from $S$ into equalities. In fact, this system has rather specific form now. Namely, all inequalities in (\ref{max_pot}--\ref{min_pot}) are of the form $x \le y + c$, where $x, y$ are variables and $c$ is a constant. For such systems one can use, for example, a simpler algorithm of Pratt~\cite{pratt}. It is also well-known that such systems are essentially equivalent to the shortest path problem.

\bigskip

%
%
%

Now we are ready to give an algorithm establishing Theorem \ref{thm:energy}. Our goal is to find the sets
\begin{align*}
W_\Max &= \{a \in V\mid \mbox{$a$ is winning for $\Max$ in the energy game $\mathcal{G}$}\}, \\
 W_\Min &= \{a \in V\mid \mbox{$a$ is winning for $\Min$ in the energy game $\mathcal{G}$}\}.
\end{align*}
\begin{algorithm}[H]
\label{energy_alg}
\SetAlgoLined
\KwResult{The sets $W_\Max, W_\Min$.}
 initialization: $x = RealizeGraph(\emptyset)$\;
 \While{ there is an edge of $G$ which is strongly violating for $x$}{
$\varepsilon_{max}\gets$ the largest $\varepsilon\in (0, +\infty)$ s.t $x + \varepsilon \chi_x^+\in\PolPoten$\;
$x\gets RealizeGraph(E_{x + \varepsilon_{max}\chi_x^+})$\;
 }
output $W_\Max = V_x^+,\,\, W_\Min = V_x^-$\;
 \caption{$n^{O(1)} \cdot 2^{n/2}$-time algorithm for energy games}
\end{algorithm}
The correctness of the output of our algorithm follows from Lemma \ref{correct}.
To compute $V_x^+, V_x^-$ and $\chi_x^+$, and to check the terminating condition we find  $\delta_x$ in strongly polynomial time by Lemma \ref{dnpg_values}.
In turn, we compute $\varepsilon_{max}$ in the same way as in Algorithm \ref{disc_alg}. To demonstrate the correctness of the algorithm it only remains to show that $\varepsilon_{max} < +\infty$ throughout the algorithm. Indeed, when the terminating condition is not yet satisfied, there exists an edge $e = (a, b)$ of the graph $G$ which is strongly violating for $x$. This edge is not tight for $x$. Let us show that this edge is tight for $x + \varepsilon \chi_x^+$ for some positive $\varepsilon$. This would show that $\varepsilon_{max} \le \varepsilon$.

Let us only consider the case $a\in V_\Max$, the case $a\in V_\Min$ will be similar. Since $(a, b)$ is not tight for $x$, we have $x_a > w(e) + x_b$. Consider the same inequality for $x + \varepsilon \chi_x^+$. Its left-hand side will be bigger by $\varepsilon \chi_x^+(a)$, while its right-hand side will be bigger by $\varepsilon \chi_x^+(b)$. Since $(a, b)$ is strongly violating for $x$, we have $\chi_x^+(b) = 1, \chi_x^+(a) = 0$. So the left-hand side does not change with $\varepsilon$, while the right-hand side strictly increases with $\varepsilon$. Hence they the left-hand and the right-hand sides will become equal for some positive $\varepsilon$.


\subsection{What if Assumption \ref{as} does not hold?}
Assume that we add small $\rho > 0$ to the weights of all edges. Then all non-negative cycles in $G$ become strictly positive. On the other hand, if $\rho$ is small enough, then all negative cycles stay negative. Thus, for all small enough $\rho > 0$ we obtain in this way an energy game equivalent to the initial one and satisfying Assumption \ref{as}. The problem is how to find $\rho > 0$ small enough so that this argument work.

If edge weights are integers, then we can set $\rho = 1/(n + 1)$. However, it is not clear how to find
 a suitable $\rho > 0$ when the weights are arbitrary real numbers. In this case we use a standard ``symbolic perturbation'' argument that allows to avoid finding $\rho$ explicitly.

Namely, we add $\rho$ to all weights of edges not as a real number but as a \emph{formal variable}. I.e., we will consider the weights as formal linear combinations of the form $a + b\cdot \rho$, where $a, b\in\mathbb{R}$ are coefficients. First, we will perform additions over such combinations. More specifically, the sum of $a + b\cdot\rho$ and $c + d\cdot \rho$ will be $(a + c) + (b + d) \cdot \rho$. We will also perform comparisons of these linear combinations. We say that $a + b\cdot \rho < c + d\cdot \rho$ if $a < c$ or $a = c, b < d$. Note that the inequality $a + b\cdot \rho < c + d\cdot \rho$ holds for \emph{formal linear combinations} $a + b\cdot \rho$ and $c + d\cdot \rho$ if and only if  for all small enough $\epsilon\in\mathbb{R}, \epsilon > 0$ the same inequality holds for \emph{real numbers} when one substitutes $\epsilon$ instead of $\rho$.

Thus, more formally, we consider the weights as elements of the additive group $\mathbb{R}^2$ equipped with the lexicographic order. Now, given our initial ``real'' energy game, we consider another one where the weight of an edge $e\in E$ is a formal linear combination $w(e) + \rho$. After that Assumption \ref{as} is satisfied (again, if one understands the weight of a cycle as an element of the group $\mathbb{R}^2$).

We then run Algorithm \ref{energy_alg}, but now with the coordinates of the vector $x$ being elements of the group $\mathbb{R}^2$. We will not have to perform any multiplications and divisions with our formal linear combinations\footnote{Multiplications and divisions would not be a disaster for this argument as we could consider formal rational fractions over $\rho$. However, we find it instructive to note that we never go beyond the group $\mathbb{R}^2$.}. 
This is because it is possible to perform Algorithm \ref{energy_alg} using only additions and comparisons.
Indeed, in computing $\varepsilon_{max}$ we solve at most $m$ one-variable linear equations with the coefficient before the variable being $1$. We should also explain how to perform $RealizeGraph(S)$ procedure using only additions and comparisons. For that we implement $RealizeGraph(S)$ using Pratt's algorithm~\cite{pratt}. This algorithm takes a systems of linear inequalities of the from $x\le y + c$ (where $x$ and $y$ are variables and $c$ is a constant). It either outputs a feasible point of the system or recognizes its infeasibility. It does so by performing the standard  Fourier–Motzkin elimnation and removing redundant inequalities. Namely, for two variables $x$ and $y$ among all inequalities of the form $x \le y + c$ it only remembers one with the smallest $c$. This keeps the number of inequalities $O(n^2)$. Thus, complexity of Pratt's algorithm is strongly polynomial, and clearly we only need additions and comparisons to perform it.

%

To argue that a version of  Algorithm \ref{energy_alg} with formal linear combinations is correct we use a sort of compactness argument. Fix some $N$ and ``freeze'' the algorithm after $N$ steps. Up to now only  finitely many comparisons of linear combinations over $\rho$ were performed. For all small enough real $\epsilon > 0$ all these comparisons will have the same result if one substitutes $\epsilon$ instead of $\rho$. So after $N$ steps the ``formal'' version of Algorithm \ref{energy_alg} will be in the same state as the ``real'' one, i.e., one where in advance we add a small enough real number $\epsilon$ to all the weights. In turn, for all small enough $\epsilon$ the ``real'' version terminates in $N = n^{O(1)} 2^{n/2}$ steps (see the next section) with the correct output to our initial energy game. It is important to note that a bound $N$ on the number of steps of the ``real'' algorithm is independent of $\epsilon$. Hence the ``formal'' version also terminates in at most $N = n^{O(1)} 2^{n/2}$ steps with the correct output.

\section{Energy games: complexity analysis}
The complexity analysis of Algorithm \ref{energy_alg} follows the same scheme as for discounted games. First, we define \emph{strong} DNP games iteration (a more restrictive version of DNP games iteration, see Subsection \ref{subsec:sdnp}). Then we consider a sequence $x_0, x_1, x_2, \ldots$ of points from $\PolPoten$ that arise in Algorithm \ref{energy_alg}. We show that the corresponding sequence of graphs $G_{x_0}, G_{x_1}, G_{x_2}, \ldots$ can be obtained in a strong DNP games iteration (Subsection \ref{subsec:energy_can}). Finally, we show that the length of a strong DNP games iteration is bounded by $O(2^{n/2})$ (Subsection \ref{subsec:sdnp_bound}).

\subsection{Definition of strong DNP games iteration}
\label{subsec:sdnp}
In a strong DNP games iteration all graphs are assumed to be bipartite and acyclic.

Consider a directed bipartite acyclic graph $H = (V = V_\Max\sqcup V_\Min, E_H)$. We say that a pair of nodes $(a, b)\in V\times V$ is \emph{strongly} violating\footnote{This notion already appeared in Lemma \ref{correct}, but only for graphs of the form $G_x, x\in\PolPoten$.} for $H$ if either $a\in V_\Min$, $\delta_H(a) > 0, \delta_H(b) < 0$ or $a\in V_\Max, \delta_H(a) < 0, \delta_H(b) > 0$. Here, as before, $\delta_H$ is the solution to (\ref{max_sink}--\ref{min_non_sink}) for $H$ (and for $\lambda = 1/2$). Note once again that for acyclic graphs we have $\delta_H(a) \neq 0$ for all $a\in V$.

Consider another directed bipartite acyclic graph $K = (V, E_K)$ over the same set of nodes as $H$, and with the same partition of $V$ into Max's nodes and into Min's nodes. We say that  \emph{$K$ can be obtained from $H$ in one step of strong DNP games iteration} if the following two conditions holds:
\begin{itemize}
\item any optimal edge of $H$ is in $E_K$;
\item the set $E_K$ contains a pair of nodes which is strongly violating for $H$.
\end{itemize}

Finally, we say that a sequence of directed bipartite acyclic graphs $H_0, H_1, \ldots, H_j$ \emph{can be obtained in strong DNP games iterations} if for all $i\in\{0, 1, \ldots, j - 1\}$ the graph $H_{i + 1}$ can be obtained from $H_i$ in one step of strong DNP games iteration.

\subsection{Why the sequence  $G_{x_0}, G_{x_1}, G_{x_2}, \ldots$ can be obtained in strong DNP games iteration}
\label{subsec:energy_can}
Consider any two consecutive points $x$ and $x^\prime = RealizeGraph(E_{x+\varepsilon_{max}\chi_x^+})$ of $\PolPoten$ from Algorithm \ref{energy_alg}. We shall show that the graph $G_{x^\prime}$ can be obtained from $G_x$ in one step of strong DNP games iteration. First, note that both of these graphs are bipartite (because the underlying energy game is bipartite) and acyclic (because of Assumption \ref{as}). Set $y =x+\varepsilon_{max}\chi_x^+$.  As the graph $G_{x^\prime}$ contains all edges of the graph $G_y$, it is enough to show the following

\begin{enumerate}[label=\textbf{(\alph*)}]
\item all the edges of the graph $G_x$ that are optimal for $G_x$ are also in the graph $G_y$;
\item there is an edge of the graph $G_y$ which is a strongly violating pair for the graph $G_x$.
\end{enumerate}

\textbf{Proof of (a).} Take any edge $(a, b)$ of the graph $G_x$ which is optimal for $G_x$. Clearly, the values of $\delta_x(a)$ and $\delta_x(b)$ are either both positive or both negative. Hence the shift $\chi_x^+$ increases both $x_a$ and $x_b$ by the same amount. This means that $(a, b)$ is still tight for $y$, i.e., $(a, b)$ is an edge of $G_y$.

\textbf{Proof of (b).} First, there exists an edge $e = (a, b)\in E$ which belongs to the graph $G_y$ and not to $G_x$. Indeed, otherwise all edges that are tight for $y$ were already tight for $x$, and hence $\varepsilon_{max}$ could be increased. It is enough to show now that any edge $(a, b)\in E_y\setminus E_x$ is strongly violating for $G_x$. Since $(a, b)$ is not tight for $x$, we have:
\begin{itemize}
\item $x_a > w(e) + x_b$ if $a\in V_\Max$;
\item $x_a < w(e) + x_b$ if $a\in V_\Min$.
\end{itemize}
On the other hand, since $(a, b)$ is tight for $y$, we have:
$$ x_a + \varepsilon_{max}\chi_x^+(a) = w(e) + x_b + \varepsilon_{max}\chi_x^+(b).$$
Hence $\chi_x^+(a) < \chi_x^+(b)$ if $a\in V_\Max$ and $\chi_x^+(a) > \chi_x^+(b)$ if $a\in V_\Min$. Recall that the vector $\chi_x^+$ is the indicator of the set of nodes where the value of $\delta_x$ is positive. This means that $(a, b)$ is strongly violating for $G_x$.

\subsection{$O(2^{n/2})$ bound on length of strong DNP games iteration}
\label{subsec:sdnp_bound}
Note that strong DNP games iteration is a special case of DNP games iteration. Hence all the results we established for DNP games iteration can be applied here. Since we are dealing with bipartite graphs, we already have the bound $O(2^n)$ proved in Subsection \ref{subsec:dnp_bip}. 

Let us first give an idea what causes an improvement from $2^n$ to $2^{n/2}$. Unlike discounted games, we now have a guarantee that every time a \emph{strongly} violating pair appears. This yields that in each step of a strong DNP games iteration \emph{both} vectors $f^H$ and $g^H$ increase in the alternating lexicographic order, not only one. Now, what provides that each time we have a strongly violating pair? Loosely speaking, Algorithm \ref{energy_alg} looks like Algorithm \ref{disc_alg} with $\lambda = 1$. For $\lambda = 1$ all the nodes from $\{v\in V\mid \delta_x(v) > 0\}$ are shifted by the same quantity, similarly for the set $\{v\in V\mid \delta_x(v) < 0\}$. Hence no violating pair can appear inside one of these sets. Instead, a new violating pair will be between these two sets, i.e., it will be strongly violating.

We now proceed to a formal argument. First, let us explain why the fact that both $f^H$ and $g^H$ increase each time leads to a $O(2^{n/2})$ bound. Note that $\|f^H\|_1= |\{a\in V\mid \delta_H(a) > 0\}|$ and $\|g^H\|_1 = |\{a\in V\mid \delta_H(a) < 0\}|$. Hence $\|f^H\|_1 + \|g^H\|_1 = n$. Therefore, if a strong DNP games iteration has length $l$, then either $\|f^H\|_1 \le n/2$ at least $l/2$ times or $\|g^H\|_1 \le n/2$ at least $l/2$ times. Hence there are at least $l/2$ different vectors $v\in\mathbb{N}^{2n - 1}$ satisfying (\ref{sum}--\ref{add}) and $\|v\|_1 \le n/2$. On the other hand, the number of such vectors is $O(2^{n/2})$. Indeed, as shown in Subsection \ref{subsec:dnp_bip} the number of $v\in\mathbb{N}^{2n - 1}$ satisfying (\ref{sum}--\ref{add}) and $\|v\|_1 = s, t(v) = t$ is $\binom{s - 1}{t}$. By summing over all $s\le n/2$ and $t$ we get the required $O(2^{n/2})$ bound.

It only remains to explain why both $f^H$ and $g^H$ increase in each step of a strong DNP games iteration. Let $H_1$ and $H_2$ be two consecutive graphs in a strong DNP games iteration. Assume first that $f^{H_2}$ is not greater than $f^{H_1}$ in the alternating lexicographic order. By Lemma \ref{main_lemma} for every $i\in\{0, 1, \ldots, n - 1\}$ it holds that $\{a\in V\mid \delta_{H_1}(a) = \lambda^i\} = \{a\in V\mid \delta_{H_2}(a) = \lambda^i\}$. In particular, $\{a\in V \mid \delta_{H_1}(a) > 0\} = \{a\in V \mid \delta_{H_2}(a) > 0\}$. Since $\delta_{H_1}$ and $\delta_{H_2}$ are non-zero in all nodes (again, this is because these graphs are acyclic), we also have $\{a\in V \mid \delta_{H_1}(a) < 0\} = \{a\in V \mid \delta_{H_2}(a) < 0\}$. Hence a pair $(a, b) \in V\times V$ is strongly violating for $H_1$ if and only if it is strongly violating for $H_2$. On the other hand, the graph $H_2$ contains as an edge a strongly violating pair for $H_1$, contradiction.

Exactly the same argument shows that $g^{H_2}$ is greater than $g^{H_1}$ in the alternating lexicographic order.


\section{Discussion}
We do not know whether the bounds we obtain for Algorithm \ref{disc_alg} are tight. It seems unlikely that this algorithm is  actually  subexponential. This is because an updated version of~\cite{dorfman2019faster} now contains a tight example for their algorithm. Due to similarities between these two algorithms, it seems plausible that this example can lifted to discounted games.

One can consider a generalization of the discounted games, namely, the \emph{multi-discounted} games (where, roughly speaking, each edge can have its own discount).
Let us note that the multi-discounted games  can also be solved by an analogue of Algorithm \ref{disc_alg}. However, we do not know whether this analogue has better time complexity than $2^{O(n\log n)}$. The reason why our analysis cannot be carried out for the multi-discounted games is that now there are super-linearly many possible values    in the underlying DNP games.  Improving $2^{O(n\log n)}$ time for the multi-discounted games is interesting on its own, but it would also have consequences for the \emph{weighted} mean payoff games (see~\cite{gimbert}).

Finally, let us mention that from Algorithm \ref{disc_alg} one can actually obtain an algorithm for the \emph{value} problem for mean payoff games. For that one should run this algorithm with $\lambda$ being a formal variable, behaving as if it were ``arbitrarily close'' to $1$. This will give us a solution to the optimality equations in a form of rational fractions in $\lambda$. It is classical that from these fractions one can extract the values of the corresponding mean payoff game. This implies that the \emph{value} problem for mean payoff games can be solved in $2^{O(n)}$ time, even when the weights are \emph{real} numbers (again, assuming an oracle access to them as in the strongly polynomial algorithms). It seems that two known exponential time algorithms for mean payoff games~\cite{lifshits2007potential, dorfman2019faster} do not have this feature. Namely, they reduce the value problem to the decision problem by a binary search. When the weights are arbitrary real numbers, this reductions does not work.

\bigskip

\textbf{Acknowledgments.} I am grateful to Pierre Ohlmann for giving a talk about~\cite{dorfman2019faster} at the University of Warwick, and to Marcin Jurdzinski for discussions. I am also grateful to the anonymous reviewers of SODA 2021 for helpful comments, in particular for a suggestion improving Theorem \ref{thm:disc}. Finally, I would like to thank the authors of~\cite{dorfman2019faster} for pointing out to an updated version of their paper.

\appendix


\section{Proof of Lemma \ref{red_to_bip}}
\label{app:b}
Let us call a node $a\in V$ of the graph $G$ \emph{trivial} in the following two cases:
\begin{itemize}
\item $a\in V_\Max$ and only nodes of $V_\Max$ are reachable from $a$;
\item $a\in V_\Min$ and only nodes of $V_\Min$ are reachable from $a$.
\end{itemize}
Next, let us call a cycle $C$ of the graph $G$ \emph{trivial} in the following two cases:
\begin{itemize}
\item cycle $C$ is non-negative and all its nodes are from $V_\Max$;
\item cycle $C$ is negative and all its nodes are from $V_\Min$.
\end{itemize}
First step of our reduction is to get rid of trivial nodes and cycles. Note that once we have detected a trivial node or a trivial cycle, we can determine the winner of our energy game in at least one node. Indeed, to determine the winner in a trivial node we essentially need to solve a \emph{one-player} energy game. It is well-known that this can be done in strongly polynomial time. In turn, all nodes of a trivial cycle are winning for the player controlling these nodes -- he can win just by staying on the cycle forever.

Next, once the winner is determined in at least one node, there is a standard way of reducing the initial game to a game with fewer nodes. Suppose we know the winner in a node $a$, say, it is $\Max$. Then $\Max$ also wins in all the nodes from where he can enforce reaching $a$. We simply remove all these nodes. This does not affect who wins the energy games in the remaining nodes. Indeed, $\Max$ has no edges to removed nodes, and a winning strategy of $\Min$ would never use an edge to these nodes. It should be also noted that in the remaining graph all the nodes still have at least one out-going edge (a sink would have been removed).

So getting rid of trivial nodes and cycles can be done as follows. We first detect whether they exist. Then we determine the winner in some node of the graph and reduce our game to a game with smaller number of nodes. Clearly, all these actions take strongly polynomial time. This can be repeated at most $n$ times, so the whole procedure takes strongly polynomial time.

From now we assume that we are given an energy game $\mathcal{G}$ on a graph $G = (V, E)$ with no trivial cycles and nodes. We construct a bipartite graph $G^\prime$ over the same set of nodes and the corresponding bipartite energy game $\mathcal{G}^\prime$ equivalent to the initial one. In the definition of $G^\prime$ we use the following notation. Consider a path $p$ of the graph $G$. We say that $p$ is $\Max$-controllable if all the nodes of $p$ except the last one are from $V_\Max$ (the last one can belong to $V_\Min$ as well as to $V_\Max$). In other words, $\Max$ should be able to navigate the pebble along $p$ without giving the control to $\Min$. Similarly, we say that $p$ is $\Min$-controllable if all the nodes of $p$ except the last one are from $V_\Min$.

 First, consider a pair of nodes $a\in V_\Max, b\in V_\Min$. We include $(a, b)$ as en edge to the graph $G^\prime$ if and only if in $G$ there is a $\Max$-controllable path from $a$ to $b$. Since $a$ is not a trivial node in $G$, there will be at least one edge starting at $a$ in $G^\prime$. Provided $(a, b)$ was included, we let its weight in $G^\prime$ be the largest weight of a $\Max$-controllable path from $a$ to $b$ in $G$ (with respect to the weight function of $G$). We call a path on which this maximum is attained \emph{underlying} for the edge $(a, b)$.  The weight of $(a, b)$ in $G^\prime$ will be finite since in $G$ there are no positive cycles consisting entirely of nodes from $V_\Max$.

We have described edges of $G^\prime$ from $V_\Max$ to $V_\Min$. Edges in the opposite direction are defined analogously. Namely, consider a pair of nodes $a\in V_\Min, b\in V_\Max$. We include this pair to $G^\prime$ as an edge if and only if in $G$ there is a $\Min$-controllable path from $a$ to $b$. Once $(a, b)$ is included, we let its weight be the minimal weight of a $\Min$-controllable path from $a$ to $b$ in $G$. A path attaining this minimum will be called underlying for $(a, b)$. Again, absence of trivial nodes guaranties that in $G^\prime$ the node $a$ will have at least one out-going edge. The weight of $(a, b)$ will be well-defined due to absence of trivial cycles.

It only remains to argue that $\mathcal{G}^\prime$ is equivalent to $\mathcal{G}$. Let $W_\Max$ ($W_\Min$) be the set of nodes that are winning for $\Max$ (for $\Min$) in $\mathcal{G}$. It is enough to show that the set $W_\Max$ (the set $W_\Min$) is winning for $\Max$ ($\Min$) in $\mathcal{G}^\prime$. We prove it only for $W_\Max$, the argument for $W_\Min$ is similar.

 Let $\sigma$ be a $\Max$'s optimal positional strategy in $\mathcal{G}$. Consider the following $\Max$'s positional strategy $\sigma^\prime$ in the graph  $G^\prime$ (this strategy will be winning for $\Max$ in the game $\mathcal{G}^\prime$ for the nodes from $W_\Max$). We will define it only for nodes in $W_\Max$. Given a Max's node $a\in W_\Max$, apply $\sigma$  to $a$ repeatedly until a node from $V_\Min$ is reached.  In fact, there is a possibility that from $a$ strategy $\sigma$ loops before reaching any Min's node. But then the corresponding cycle would be negative  (there are no trivial cycles). This would mean that  $\sigma$ is not winning for $\Max$ in $a$. So we conclude that indeed  by applying repeatedly $\sigma$ to $a$ we reach a node from $V_\Min$. Let this node from $V_\Min$ be $b$. Note that $(a, b)$ is an edge of $G^\prime$, because we have reached $b$ by a $\Max$-controllable path from $a$. We let $\sigma^\prime(a) = (a, b)$.

We shall prove that only non-negative cycles are reachable from $W_\Max$ in $(G^\prime)^{\sigma^\prime}$.
First, note that edges that $\sigma^\prime$ uses do not leave $W_\Max$. This is because by applying a winning Max's strategy repeatedly we cannot leave $W_\Max$ in $G$. Moreover, no Min's edge in $G^\prime$ can leave $W_\Max$. Indeed, otherwise Min could leave $W_\Max$ in $G$. Thus, it remains to argue that any cycle $C^\prime$ in $(G^\prime)^{\sigma^\prime}$, located in $W_\Max$, is non-negative. We do so by indicating in the graph $G^\sigma$ a cycle $C$, located in $W_\Max$ and having at most the same weight as $C^\prime$. As $C$ is non-negative, the same holds for $C^\prime$.

To obtain $C$ we replace each edge $(a, b)$ of $C^\prime$ by a certain path $p_{(a, b)}$  from $a$ to $b$ in $G^\sigma$. The path $p_{(a, b)}$ will never leave $W_\Max$ and its weight in $G$ will be at most the weight of the edge $(a, b)$ in $G^\prime$. 

If $(a,b)\in V_\Min\times V_\Max$, we let $p_{(a,b)}$ be the underlying path for $(a, b)$. Its weight in $G$ just equals the weight of $(a, b)$ in $G^\prime$. As this path is $\Min$-controllable, it belongs to $G^\sigma$ and thus never leaves $W_\Max$.

If   $(a,b)\in V_\Max\times V_\Min$, then $(a, b)$ is used by strategy $\sigma^\prime$ in the node $a$. Hence by definition of $\sigma^\prime$ there is a $\Max$-controllable path in $G^\sigma$ from $a$ to $b$. We let $p_{(a,b)}$ be this path. It never leaves $W_\Max$ as $\sigma$ cannot leave $W_\Max$. The weight of $(a, b)$ in $G^\prime$ is the largest weight of a $\Max$-controllable path from $a$ to $b$ in $G$, so the weight of $p_{(a,b)}$ can only be smaller.

\end{document}